\newcommand{\calA}{{\mathcal A}}
\newcommand{\calH}{{\mathcal H}}
\newcommand{\N}{{\mathbb N}}
\newcommand{\Z}{{\mathbb Z}}
\newcommand{\Q}{{\mathbb Q}}
\newcommand{\C}{{\mathbb C}}
\def\abs#1{|#1|}
\def\absv#1{\|#1\|}
 \def\shuffle{\mathop{_{^{\sqcup\!\sqcup}}}} 
\gdef\stuffle{\;% 
  \setlength{\unitlength}{0.0125cm}% 
  \begin{picture}(20,10)(220,580) 
  \thinlines 
  \put(220,592){\line( 0,-1){ 10}} 
  \put(220,582){\line( 1, 0){ 20}} 
  \put(240,582){\line( 0, 1){ 10}} 
  \put(230,592){\line( 0,-1){ 10}} 
  \put(225,587){\line( 1, 0){ 10}} 
  \end{picture}\; 
}
\newcommand{\Li}{\operatorname{Li}}
\def\L{\mathrm{L}}
\def\H{\mathrm{H}}
\newcommand{\poly}[2]{#1 \langle #2 \rangle}
\def\QY{\poly{\Q}{Y}}
\def\QY_0{\Q\left\langle{Y_0}\right\rangle}
 \newcommand{\calB}{{\cal B}}
\def\path{\rightsquigarrow}
\def\bv{\mid}
\def\abs#1{\bv\!#1\!\bv}
\def\scal#1#2{\langle #1\bv#2 \rangle}
\def\ncp#1#2{#1\langle #2\rangle}
\def\ncs#1#2{#1\langle \!\langle #2\rangle \!\rangle}
\newtcolorbox{tcbstrikeout}{breakable,
 enhanced jigsaw,
 opacityback=0,
 parbox=false,
 boxrule=0mm,
 top=0mm,bottom=0pt,left=0pt,right=0pt,
 boxsep=0pt,
 frame hidden,
 finish={\fill[pattern=mystrikeout] (frame.north west) rectangle (frame.south east);}
}
\def\calB{{\mathcal B}}
\def\D{{\mathcal D}}
\def\calH{{\mathcal H}}
\def\calX{{\mathcal X}}
\newcommand{\kk}{\mathbf{k}}
\def\Dom{\mathrm{Dom}}
\def\QY{\Q\langle Y \rangle}
\def\scal#1#2{\langle #1\bv#2 \rangle}
\def\bv{\mid}
\def\abs#1{\bv\!#1\!\bv}
\def\der{\mathbf{d}}
\def\ul#1{\underline{#1}}
\newcounter{per1}
\def\2#1{\ifnum#1<10 0\fi\the#1}
\xdef\isodayandtime{
%\centerline
{\2\day-\2\month-\the\year\space\2{\count0}:%
\2{\count2}}}
\begin{document}
\mainmatter              % start of a contribution
\title{Towards a Theory of Domains for Harmonic Functions and its Symbolic Counterpart.}
\titlerunning{Theory of Domains \& its Symbolic Counterpart.}  % abbreviated title (for running head)
%                                     also used for the TOC unless
%                                     \toctitle is used
%
\author{Bui Van Chien\inst{4}, G\'erard H. E. Duchamp\inst{1},
Ngo Quoc Hoan\inst{3},\\ V. Hoang Ngoc Minh\inst{2}, Nguyen Dinh Vu\inst{5}    }
\authorrunning{B. V. Chien, G. Duchamp, N. Q. Hoan, H.N. Minh, N. D. Vu} % abbreviated author list (for running head)
%
%%%% list of authors for the TOC (use if author list has to be modified)
\tocauthor{Bui Van Chien, G\'erard H. E. Duchamp,
Ngo Quoc Hoan, Hoang Ngoc Minh Vincel, Nguyen Dinh Vu  }
\institute{LIPN - UMR 7030, CNRS, 93430 Villetaneuse, France,\\
\email{gheduchamp@gmail.com},\\ 
\and
University of Lille, 1, Place D\'eliot, 59024 Lille, France,\\
\email{ vincel.hoang-ngoc-minh@univ-lille.fr}
\and
Hai Phong University, 171, Phan Dang Luu, Kien An, Hai Phong, Viet Nam \\
\email{quochoan\_ngo@yahoo.com.vn}
\and
Hue University, 77, Nguyen Hue, Hue, Viet Nam\\
 \email{bvchien.vn@gmail.com} 
\and  Institute of Mathematics, Vietnam Academy of Science and Technology,\\ 18 Hoang Quoc Viet, 10307 Hanoi, Vietnam\\
\email{ndvu@math.ac.vn}}

\maketitle              % typeset the title of the contribution

\begin{abstract}
In this paper, we begin by reviewing the calculus induced by the framework of \cite{GHM1}. In there, we extended Polylogarithm functions over a subalgebra of noncommutative rational power series, recognizable by finite state (multiplicity)
automata over the alphabet
$X=\{x_0,x_1\}$. The stability of this calculus under shuffle products relies on the nuclearity of the target space  \cite{Sch}. We also concentrated on algebraic and analytic aspects
of this extension allowing to index polylogarithms,
at non positive multi-indices, by rational series
and also allowing to regularize divergent polyzetas, at non positive multi-indices \cite{GHM1}. As a continuation of  works in \cite{GHM1} and in order to understand the bridge between the extension of this ``polylogarithmic calculus'' and the world of harmonic sums, we propose a local theory, adapted to a full calculus on indices of Harmonic Sums based
on the Taylor expansions, around zero, of polylogarithms with index $x_1$ on the rightmost end.  
This theory is not only compatible with Stuffle products but also with the Analytic Model. In this respect, it provides a stable and fully algorithmic model for Harmonic calculus. Examples by computer are also provided\footnote{This research of Ngo Quoc Hoan is funded by the Vietnam National
Foundation for Science and Technology Development (NAFOSTED) under grant number  101.04-2021.41}. 
% We would like to encourage you to list your keywords within
% the abstract section using the \keywords{...} command.
\keywords{theory of domains, harmonic sums, polylogarithms}
\end{abstract}
\section{Introduction}
Riemann's zeta function is defined by the series 
\begin{eqnarray}\label{Zetavalue}
\zeta (s)  &:=& \sum\limits_{n \geq 1} \dfrac{1}{n^s} 
\end{eqnarray}
where $s$ is a complex number. It is absolutely 
for $\Re (s) > 1$ (for any $s \in \C$, $\Re (s)$ stands for the real part of $s$).

It can be extended to a meromorphic function on the complex plane $\C$ 
with a single pole at $s = 1$ \cite{riemann}\footnote{Whence the famous sum 
$
\zeta(-1)=1+2+3+\cdots =-\frac{1}{12}
$
by which, among other ``results'', S. Ramanujan was noticed by G. H. H. Hardy (see \cite{SR1}).}.). In fact, the story began with Euler's works to find the solution of Basel's problem. In these works,  Euler proved  that \cite{euler1}
 \begin{eqnarray}
 \zeta (2 ) = \sum\limits_{n \geq 1} \dfrac{1}{n^2} = \dfrac{\pi^2}{6}.
\end{eqnarray} 
Moreover, for any $s_1, s_2 \in \C$ such that $\Re (s_1) > 1$ and $\Re (s_2) > 1$, Euler gave an important identity as follows\footnote{In fact, in Euler's formula, $s_1, s_2 \in \N_{+}$. This identity 
appeared under the name ``Prima Methodus ...'' (see \cite{euler2} pp 141-144).}:  
\begin{eqnarray}
\zeta (s_1) \zeta (s_2) &=& \zeta (s_1, s_2) + \zeta (s_1+s_2) + \zeta (s_2,s_1).
\end{eqnarray} 
where, for any $s_1, s_2 \in \C$ such that $\Re (s_1) > 1$ and $\Re (s_2) > 1$, 
\begin{eqnarray}
\zeta (s_1,  s_2) &:=& \sum\limits_{n_1 > n_2 \geq 1} \dfrac{1}{n_1^{s_1} n_2^{s_2}}. 
\end{eqnarray}
The numbers $\zeta (s_1, s_2)$ were called ``double zeta values'' at $(s_1, s_2)$. More generally, for any $r \in \N_+$ and $s_1, \ldots , s_r \in \C$, we denote 
\begin{eqnarray}
\zeta (s_1, \ldots , s_r) &:=& \sum\limits_{n_1 > \ldots > n_r \geq 1} \dfrac{1}{n_1^{s_1} \ldots n_r^{s_r}}. 
\end{eqnarray} 
Then the results of K. Matsumoto \cite{FKMT} showed that the series $\zeta (s_1, \ldots , s_r)$ converges absolutely for $s\in \calH_r $ where 
\begin{eqnarray}
\calH_r  &:=& \left\lbrace {\bf s} = (s_1, \ldots , s_r) \in \C^r | \forall m =1, \ldots , r; \ \Re (s_1) + \ldots + \Re(s_m) > m     \right\rbrace .
\end{eqnarray} 
In the convergent cases, $\zeta (s_1, \ldots , s_r)$  were called ``polyzeta values'' at multi-index ${\bf s} = (s_1, \ldots , s_r)$. 
Indeed $\mathbf{s}\mapsto \zeta(\mathbf{s})$ is holomorphic on $\calH_r$ and has been extended to $\C^r$ as a meromorphic function (see \cite{Goncharov,Zhao}). 

In fact, for any $r$-uplet $(s_1,\ldots,s_r)\in\N_+^r$, $r\in\N_+$, the  polyzeta $\zeta (s_1, \ldots , s_r)$ is also the limit at $z=1$ of the {\it polylogarithmic function}, defined by:
\begin{eqnarray}\label{polylogarithm}
\Li_{s_1,\ldots,s_r}(z):=\sum_{n_1>\ldots>n_r>0}\frac{z^{n_1}}{n_1^{s_1}\ldots n_r^{s_r}}
\end{eqnarray}
 for any $z\in\C$ such that $\abs{z}<1$. It is easily seen that, for any $s_i \in \N_+,\ r > 1$, 
 \begin{eqnarray}\label{recursion}
 z\dfrac{d}{dz} \Li_{s_1, \ldots , s_r} (z)&=& \Li_{s_1-1, \ldots , s_r} (z) \mbox{ if } s_1> 1\\
 (1 - z) \dfrac{d}{dz} \Li_{1, s_2, \ldots , s_r} (z) &=& \Li_{s_2, \ldots , s_r} (z) \mbox{ if } r > 1 
 \end{eqnarray}
 and this formulas will be ended at the ``seed'' $ \Li_1 (z) = \log \left(\dfrac{1}{1-z} \right).  $
 
  Moreover, if $X^*$ is the free monoid of rank two (generators, or alphabet, $X = \left\lbrace x_0,  x_1 \right\rbrace$ and neutral $1_{X^*})$ then the polylogarithms indexed by a list 
\begin{eqnarray}\label{re-index}
(s_1,\ldots,s_r)\in\N_{+}^r\mbox{ can be reindexed by the word }x_0^{s_1-1}x_1\ldots x_0^{s_r-1}x_1\in X^*x_1
\end{eqnarray}  
In order to reverse the recursion introduced in Eqns. \ref{recursion}, we introduce two differential forms 
\begin{eqnarray}
\omega_0(z)=z^{-1}dz&\mbox{and}&\omega_1(z)=(1-z)^{-1}dz,
\end{eqnarray} 
on $\Omega$\footnote{$\Omega$ is the simply connected domain $\C\setminus(]-\infty,0]\cup[1,+\infty[)$.}. 
We then get an integral representation\footnote{In here, we code the moves $z\dfrac{d}{dz}$ (resp. $(1-z) \dfrac{d}{dz}$) - or more precisely sections $\int_{0}^z \dfrac{f(s)}{s} ds$ (resp. $\int_0^z \dfrac{f(s)}{1-s} ds $) - with $x_0$ (resp. $x_1$).} of the functions \eqref{polylogarithm} as follows\footnote{Given
a word $w\in X^*$,  we note $|w|_{x_1}$ the number of occurrences of $x_1$ within $w$.} \cite{Minh1}
\begin{eqnarray}\label{Improper_Int}
\Li_w(z)=\left\{
\begin{array}{lclcl}
1_{\calH(\Omega)}&\mbox{if}&w=1_{X^*}\\[3mm]
\displaystyle\int_{0}^z\omega_1(s)\Li_u(s)&\mbox{if}& w=x_1u \\[3mm]
\displaystyle\int_{1}^z\omega_0(s)\Li_u(s)&\mbox{if}&w=x_0u&\mbox{and}&|u|_{x_1}=0, i.e. w\in x_0^*\\[3mm]
\displaystyle\int_{0}^z\omega_0(s)\Li_u(s)&\mbox{if}&w=x_0u&\mbox{and}&|u|_{x_1}>0, i.e. w\notin x_0^*,
\end{array}\right.
\end{eqnarray}
the upper bound $z$ belongs to $\Omega$ (we recall that 
$\Omega=\C\setminus(]-\infty,0]\cup[1,+\infty[)$
is simply connected domain so that the intergrals, which can be proved to be convergent in all cases, depend 
only on  their bounds). The the neutral element of the algebra of analytic functions $\calH(\Omega)$, a 
constant function will be here denoted 
$1_{\calH(\Omega)}$.\\
This provides not only the analytic
continuation of \eqref{polylogarithm} to $\Omega$ but also extends the indexation
to the whole alphabet $X$, allowing to study the complete generating series
\begin{eqnarray}\label{polylog2}
\L(z)=\sum_{w\in X^*}\Li_w(z)w
\end{eqnarray}
and show that it is the solution of the following first order noncommutative differential equation
\begin{eqnarray}\label{DrinfeldSys}
\left\{
\begin{array}{lcl}
\der(S)=(\omega_0(z)x_0+\omega_1(z)x_1)S,&&(NCDE)\cr
\lim\limits_{z\in\Omega,z\to 0}S(z)e^{-x_0\log(z)}=1_{\ncs{\calH(\Omega)}{X}},
&&\mbox{asymptotic initinial condition,}
\end{array}\right.
\end{eqnarray}
where, for any $S\in\ncs{\calH(\Omega)}{X}$.\\ 
Through term by term derivation, one gets \cite{Drin}
\begin{eqnarray}
\der(S)=\sum_{w\in X^*}\dfrac{d}{dz}(\scal{S}{w})w.
\end{eqnarray}
This differential system allows to show that $\L$ is a $\shuffle$-character\footnote{Here, the shuffle product is denoted by $\shuffle$.  It will be redefined in the section \ref{stuffledef} .}  \cite{MJOP}, \textit{i.e.}
\begin{eqnarray}\label{Li_shuffle_mor0}
\forall u,v\in X^*,\quad\scal{\L}{u\shuffle v}=\scal{\L}{u}\scal{\L}{v}
&\mbox{and}&\scal{\L}{1_{X^*}}=1_{\calH(\Omega)}.
\end{eqnarray}

Note that, in what precedes, we used the pairing  $\scal{\bullet}{\bullet}$ between series and polynomials,
classically defined by, for $T\in\ncs{\kk}{X}$ and $P\in\ncp{\kk}{X}$
%\footnote{Here $R$ is any commutative ring(like $\calH(\Omega),\C,\calZ[\gamma]$, ...).}  
\begin{eqnarray}
\scal{T}{P}=\sum_{w\in X^*}\scal{T}{w}\scal{P}{w}, 
\end{eqnarray}
where, when $w$ is a word, $\scal{S}{w}$ stands for the coefficient of $w$ in $S$ and $\kk$ any commutative ring (as here $\calH(\Omega)$). With this at hand,
we extend  at once the indexation of $\Li$ from $X^*$ to $\ncp{\C}{X}$ by 
\begin{eqnarray}\label{linext}
\Li_{P}:=\sum_{w\in X^*}\scal{P}{w}\Li_w=\sum_{n\ge0}\biggl(\sum_{|w|=n}\scal{P}{w}\Li_w\biggr).
\end{eqnarray}

In \cite{GHM1}, it has been established that the polylogarithm, well defined locally by \eqref{polylogarithm},
could be extended to some series (with conditions) by the last part of formula \eqref{linext} where the polynomial
$P$ is replaced by some series. A complete theory of global domains was presented in \cite{GHM1}, the present work concerns the whole project of extending 
$\H_{\bullet}$ \cite{GHM,Hoffman2}.\\
over stuffle subalgebras of rational power series on the alphabet $Y$, in particular
the stars of letters and some explicit combinatorial consequences of this extension.

In fact, we focus on what happens in (well choosen) neighbourhoods of zero (see section \ref{removable}), therefore, the aim of this work is manyfold. 

a) Use the extension to local Taylor expansions\footnote{Around zero.} as in \eqref{polylogarithm}
and the coefficients of their quotients by $1-z$, namely the harmonic sums, denoted $\H_\bullet$ and defined,
for any $w\in X^*x_1$, as follows\footnote{Here, the ${\tt conc}$-morphism
$\pi_X:(\ncp{\C}{Y},{\tt conc},1_{Y^*})\rightarrow(\ncp{\C}{X},{\tt conc},1_{X^*})$
is defined by $\pi_X(y_n)=x_0^{n-1}x_1$ and $\pi_Y$ is its inverse on $\mathrm{Im}(\pi_X)$.
See \cite{CHM1,GHM1} for more details and a full Definition of $\pi_Y$.} (\cite{Minh4} see also related literature \cite{BCK,Hoffman2})
\begin{eqnarray}
\frac{\Li_w(z)}{1-z}=\sum_{N\ge0}\H_{\pi_X(w)}(N)z^N,
\end{eqnarray}
by a suitable theory of local domains which assures to carry over the computation of these Taylor coefficients
and preserves the stuffle indentity, again true for polynomials over the alphabet $Y=\{y_n\}_{n\ge1}$, \textit{i.e.}\footnote{Here,  $\stuffle$ stands for the stuffle product which  will be recalled as in the section \ref{stuffledef}.}
\begin{eqnarray}
\forall S,T\in\ncp{\C}{Y},&&\H_{S\stuffle T}=\H_{S}\H_{T}
\mbox{ and }\H_{1_{\ncp{\C}{Y}}}=1_{\C^\N},
\end{eqnarray}
note that $1_{\ncp{\C}{Y}}$ is identified with $1_{Y^*}$ and $1_{\C^\N}$ is the constant (to one) function\footnote{In fact, it could be $\Q$ but we will use afterwards $\C$-linear combinations.} $\N\to \C$. This means that $\H_{\bullet}:(\ncp{\C}{Y},\stuffle,1_{Y^*})\longrightarrow(\C\{\H_w\}_{w\in Y^*},\times,1)$ mapping any word\\ $w=y_{s_1}\ldots y_{s_r}\in Y^*$ to
\begin{eqnarray}\label{HarmonicSums}
\H_w=\H_{s_1,\ldots,s_r}=\sum_{N\ge n_1>\ldots>n_r>0}\frac1{n_1^{s_1}\ldots n_r^{s_r}},
\end{eqnarray}
is a $\stuffle$ (unital) morphism\footnote{It can be proved that this morphism is into \cite{Minh4}.}.

b) Extend these correspondences (\textit{i.e.} $\Li_\bullet,\H_\bullet$) to some series (over $X$ and $Y$, respectively)
in order to preserve the identity\footnote{Here $\odot$ stands for the Hadamard product \cite{Had}.} \cite{Minh4}
\begin{eqnarray}\label{idP}
\frac{\Li_{\pi_X(S)}(z)}{1-z}\odot\frac{\Li_{\pi_X(T)}(z)}{1-z}=\frac{\Li_{\pi_X(S\stuffle T)}(z)}{1-z}.
\end{eqnarray}
true for polynomials $S,T\in\ncp{\C}{Y}$.

To this end, we use the explicit parametrization of the $\tt conc$-characters obtained in \cite{CHM1,GHM1} and the fact that,
under stuffle products, they form a group. 
\section{Polylogarithms: from global to local domains}
Now we are facing the following constraint: 

\textit{In order that the results given by symbolic computation reflect the reality with complex numbers (and analytic functions), we have to introduce some topology \footnote{Readers who are not keen on topology or functional analysis may skip the details of this section and hold its conclusions.}.}

\noindent  
Let $\calH(\Omega)=C^\omega(\Omega;\C)$ be the algebra (for the pointwise product) of complex -valued functions which are holomorphic on $\Omega$. Endowed with the topology of compact convergence
\footnote{
This topology is defined by the seminorms\\ 
\centerline{$p_K(f)=\sup_{s\in K}|f(s)|$
($K\subset \Omega$ is compact).}
}, it is a nuclear space\footnote{Space where commutatively convergent and absolutely convergent series are the same. This will allow the domain of the polylogarithm to be closed by shuffle products (i.e. the possiblity to compute legal polylogarithms through shuffle products).}.
\begin{definition}\label{domLi}
i) Let $S\in \ncs{\C}{X}$ be a series decomposed in its homogeneous (w.r.t. the length) components $S_n=\sum_{|w|=n}\scal{S}{w}\,w$ (so that $S=\sum_{n\geq 0}S_n$) is in the \textit{domain of $\Li$} iff the family 
$(\Li_{S_n})_{n\geq 0}$
%\begin{equation}
%\Big(\Li_{S_n}\Big)_{n\geq 0}
%\end{equation}  
is summable in $\calH(\Omega)$ in other words, due to the fact that the space is complete (see \cite{Sch}), if one has 
\begin{equation}\label{Cauchy}
(\forall W\in \mathcal{B}_{\calH(\Omega)})(\exists N)(\forall n\geq N)(\forall k)(\sum_{n\leq j\leq n+k}\Li_{S_j}\in W)\ .
\end{equation}
where $\mathcal{B}_{\calH(\Omega)}$ is the set of neighbourhoods of $0$ in $\calH(\Omega)$.

ii) The set of these series will be noted $Dom(\Li)$ and, for $S\in Dom(\Li)$, the sum 
$\sum_{n\geq 0} \Li_{S_n}$ will be noted $\Li_S$. 
\end{definition}
Of course criterium \ref{Cauchy} is only a theoretical tool to establish properties of the Domain of $\Li$. In further calculations (i.e. in practice), we will not use it but the stability of the domain under certain operations.
\begin{example}[\cite{Minh1}]
For example, the classical polylogarithms: dilogarithm 
$\Li_2$, trilogarithm $\Li_3$,
etc... are defined and obtained through the coding \eqref{re-index} by 
\begin{eqnarray*}
\Li_k(z)=\sum_{n\ge1}\frac{z^n}{n^k}=\Li_{x_0^{k-1}x_1}(z)=\scal{\L(z)}{x_0^{k-1}x_1}
\end{eqnarray*}
(where $\L(z)$ is as in Eq. \ref{polylog2}) 
but, one can check that, for $t\ge0$ (real), the series 
$(tx_0)^*x_1$ belongs to $\Dom(\Li_\bullet)$ (see Def. \ref{domLi}.ii) iff $0\le t<1$. In fact, in this case, 
$$
\Li_{(tx_0)^*x_1}(z) = \sum\limits_{n \geq 1} \dfrac{z^n}{n -t}.
$$ 
This opens the door of Hurwitz polyzetas \cite{Minh6}.
\end{example}

The map $\Li_{\bullet}$ is now extended to a subdomain of $\ncs{\C}{X}$,
called $\Dom(\Li_{\bullet})$ (see also \cite{CHM1,GHM1}).
%\begin{tcbstrikeout}
% It is the set of series
%\begin{eqnarray}
%S=\sum\limits_{n\ge0}S_n,&\mbox{where}&S_n:=\sum\limits_{|w|=n}\scal{S}{w}
%\end{eqnarray}
%such that $\sum\limits_{n\ge0}\Li_{S_n}$ is commutatively (or unconditionally) convergent for the standard topology on $\calH(\Omega)$ \cite{Sch}.\tcp{duplicate with def. \ref{domLi} todo (GHED) clear it}
%\end{tcbstrikeout}

\begin{example}
For any $\alpha , \beta \in \C$, 
$(\alpha x_0)^*, (\beta x_1)^*$, and  
$(\alpha x_0 + \beta x_1)^*=
(\alpha x_0)^*\shuffle (\beta x_1)^*$. We have 
\begin{eqnarray*}
&& \Li_{\alpha x_0^*} (z) = z^{\alpha}\ ;
\ \Li_{\beta x_1^*}(z)= (1-z)^{-\beta}\ ;\ 
 \Li_{(\alpha x_0 + \beta x_1)^*}(z) = 
 z^{\alpha} (1-z)^{-\beta}
\end{eqnarray*}
where $z \in \Omega$.
\end{example}

\begin{proposition} i) The domain $Dom(\Li)$ is a shuffle subalgebra of $(\ncs{\C}{X},\shuffle,1_{X^*})$.\\
ii) The extended polylogarithm 
$$
\Li:\ Dom(\Li)\to \calH(\Omega)
$$
is a shuffle morphism,i.e. 
 $S,T\in Dom(\Li)$, we still have 
\begin{equation}\label{Li_shuffle_mor}
\Li_{S\shuffle T}=\Li_{S}\Li_{T}\mbox{ and }
\Li_{1_{X^*}}=1_{\calH(\Omega)}
\end{equation}  
\end{proposition}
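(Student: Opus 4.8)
The plan is to prove (i) and (ii) together, since they feed into one another: showing that $\Li_{S\shuffle T} = \Li_S\Li_T$ simultaneously establishes that $S\shuffle T\in Dom(\Li)$, and hence closure under $\shuffle$. First I would reduce everything to the homogeneous components. If $S=\sum_{m\ge0}S_m$ and $T=\sum_{n\ge0}T_n$ are the length-gradings, then because $\shuffle$ preserves length additively, the length-$p$ component of $S\shuffle T$ is $\sum_{m+n=p}S_m\shuffle T_n$, a \emph{finite} sum of polynomials. So $\Li_{(S\shuffle T)_p}=\sum_{m+n=p}\Li_{S_m\shuffle T_n}=\sum_{m+n=p}\Li_{S_m}\Li_{T_n}$, the last equality being the already-established shuffle morphism property \eqref{Li_shuffle_mor0} for \emph{polynomials} (each $S_m,T_n$ is a polynomial). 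Thus the statement to prove becomes: the double family $(\Li_{S_m}\Li_{T_n})_{m,n\ge0}$ is summable in $\calH(\Omega)$, and its sum equals $\bigl(\sum_m\Li_{S_m}\bigr)\bigl(\sum_n\Li_{T_n}\bigr)$.

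The key point is nuclearity. Since $S,T\in Dom(\Li)$, the families $(\Li_{S_m})_{m\ge0}$ and $(\Li_{T_n})_{n\ge0}$ are summable, hence \emph{absolutely} (= commutatively) summable because $\calH(\Omega)$ is nuclear; equivalently, for every continuous seminorm $p_K$ one has $\sum_m p_K(\Li_{S_m})<\infty$ and $\sum_n p_K(\Li_{T_n})<\infty$. Now I would invoke continuity and submultiplicativity of the seminorms $p_K(f)=\sup_{s\in K}|f(s)|$: they satisfy $p_K(fg)\le p_K(f)p_K(g)$. Therefore
\[
\sum_{m,n\ge0} p_K\bigl(\Li_{S_m}\Li_{T_n}\bigr)\;\le\;\sum_{m,n\ge0} p_K(\Li_{S_m})\,p_K(\Li_{T_n})\;=\;\Bigl(\sum_m p_K(\Li_{S_m})\Bigr)\Bigl(\sum_n p_K(\Li_{T_n})\Bigr)\;<\;\infty .
\]
This holds for every compact $K\subset\Omega$, so the double family is absolutely summable in $\calH(\Omega)$; by nuclearity it is summable, and summability is stable under the grouping $\{m+n=p\}$. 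Hence $S\shuffle T\in Dom(\Li)$ and, by the standard result that the sum of an absolutely summable double family equals the iterated sum (Fubini for summable families in a complete space), $\sum_{m,n}\Li_{S_m}\Li_{T_n}=\bigl(\sum_m\Li_{S_m}\bigr)\bigl(\sum_n\Li_{T_n}\bigr)=\Li_S\,\Li_T$, which is \eqref{Li_shuffle_mor}. That $Dom(\Li)$ is a linear subspace is immediate (summability is preserved under scalar multiples and sums of families), and $1_{X^*}\in Dom(\Li)$ with $\Li_{1_{X^*}}=1_{\calH(\Omega)}$ is built in; together with $\shuffle$-closure this gives (i), and the morphism identity plus $\Li_{1_{X^*}}=1_{\calH(\Omega)}$ gives (ii).

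The main obstacle — really the only substantive point — is the passage from ``summable'' to ``absolutely summable'' and the legitimacy of the Fubini-type rearrangement: both rest entirely on the nuclearity of $\calH(\Omega)$ (Grothendieck: in a nuclear Fréchet space, unconditional and absolute convergence coincide), which is precisely the topological input flagged in the introduction and Definition \ref{domLi}. Everything else is bookkeeping with the length grading and the submultiplicativity $p_K(fg)\le p_K(f)p_K(g)$ of the sup-seminorms. One should also remark that since $\Omega$ is exhausted by countably many compacts, checking the seminorm estimates on each $K$ suffices to conclude summability in the Fréchet topology.
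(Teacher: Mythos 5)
Your argument is correct, and it is worth noting that the paper itself does not spell out a proof of this Proposition at all: its ``proof'' is a one-line deferral to \cite{GHM1}. What you wrote is a self-contained reconstruction that matches the strategy the paper only alludes to elsewhere (the footnote on nuclearity stating that absolutely and commutatively convergent series coincide in $\calH(\Omega)$, and the proof of Proposition~\ref{lem2}, point~2, which likewise invokes ``absolute convergence as in \cite{GHM1}''). Your three ingredients are exactly the right ones: (a) the length grading, so that $(S\shuffle T)_p=\sum_{m+n=p}S_m\shuffle T_n$ is a finite sum of polynomials and the polynomial identity \eqref{Li_shuffle_mor0} applies componentwise; (b) nuclearity of the Fr\'echet space $\calH(\Omega)$ \cite{Sch}, turning summability of $(\Li_{S_m})_m$ and $(\Li_{T_n})_n$ into absolute summability for every seminorm $p_K$; (c) submultiplicativity $p_K(fg)\le p_K(f)\,p_K(g)$ of the sup-seminorms, giving absolute summability of the double family $(\Li_{S_m}\Li_{T_n})_{m,n}$, after which Fubini-type regrouping by $m+n=p$ yields both $S\shuffle T\in \Dom(\Li)$ and $\Li_{S\shuffle T}=\Li_S\Li_T$; linearity and the unit are immediate. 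One small point of hygiene: the criterion \eqref{Cauchy} as printed only tests consecutive blocks, i.e.\ ordinary convergence of the series $\sum_n\Li_{S_n}$, whereas your argument (like the paper's intent) uses summability of the family, i.e.\ unconditional convergence; you implicitly read the Definition in the stronger sense, which is what ``the family is summable'' means and what makes the nuclearity step legitimate, so you may want to flag that reading explicitly. Compared with simply citing \cite{GHM1}, your route buys a complete argument inside the paper at the cost of invoking the Grothendieck--Pietsch theorem on summable families in nuclear spaces, which is precisely the topological input the authors chose to outsource.
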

\begin{proof} This proof has been done in \cite{GHM1}.
\end{proof}

The picture about $Dom(\Li)$ within the algebra 
$(\ncs{\C}{X},\shuffle,1_{X^*})$, the positioning of 
$\ncs{\C^{\mathrm{rat}}}{X}$ (rational series, see \cite{berstel,CHM1,GHM1}) and shuffle subalegbras as, for example,\\ 
$\calA=\ncp{\C}{X}\shuffle \ncs{\C^{\mathrm{rat}}}{x_0}\shuffle \ncs{\C^{\mathrm{rat}}}{x_1}$ read as follows:
\vspace{4mm}
\begin{center}
\begin{tikzpicture}
\def\radius{2cm}
\def\mycolorbox#1{\textcolor{#1}{\rule{2ex}{2ex}}}
\colorlet{colori}{blue!60}
\colorlet{colorii}{red!60}
%
% some coordinates for the center of the circles
\coordinate (ceni);
\coordinate[xshift=.9\radius] (ceniii);
\coordinate[xshift=\radius] (cenii);

% the circles
\draw (ceni) circle (\radius);
\draw (cenii) circle (\radius);
\draw (ceniii) circle (0.3\radius);

% the rectangle
\draw  ([xshift=-15pt,yshift=15pt]current bounding box.north west) 
  rectangle ([xshift=20pt,yshift=-20pt]current bounding box.south east);

%the labels
\node[xshift=-.9\radius] at (ceni) {$\Dom(\Li)$};
\node[xshift=.9\radius] at (cenii) {$\ncs{\C^{\mathrm{rat}}}{X}$};
\node[xshift=.9\radius] at (ceni) {$\calA$};
\node[xshift=-30pt,yshift=\radius+5pt] at (ceni) {$\ncs{\C}{X}$};
\end{tikzpicture}
\end{center}
\section{From Polylogarithms to Harmonic sums}\label{removable}
Definition of $Dom(\Li)$ has many merits\footnote{As the fact that, due to special properties of $\calH(\Omega)$
(it is a nuclear space \cite{Sch}, see details in \cite{CHM1}), one can show that
$\Dom(\Li)$ is closed by shuffle products.} and can easily be adapted to arbitrary (open and connected)
domains. However this definition, based on a global condition of a fixed domain $\Omega$, does not provide
a sufficiently clear  interpretation of the stable symbolic computations around a point, in particular
at $z=0$. One needs to consider a sort of ``symbolic local germ'' worked out explicitely. Indeed,
as the harmonic sums (or MZV\footnote{Multiple Zeta Values.}) are the coefficients of the Taylor expansion at zero of the convergent
polylogarithms divided by $1-z$, we only need to know locally these functions. In order to gain more
indexing series and to describe the local situation at zero, we reshape and define a new domain of $\Li$
around zero to $\Dom^{\mathrm{loc}}(\Li_{\bullet})$.\\
The first step will be to characterize the polylogarithms having a removable singularity at zero
The following Proposition helps us characterize their indices.
\begin{proposition}\label{removable1}
Let $P\in \ncp{\C}{X}$ and  
$f(z)=\scal{\L}{P}=\sum_{w\in X^*}\scal{P}{w}\Li_w$.\\ 
1) The following conditions are equivalent
\begin{enumerate}
\item[i)] $f$ can be analytically extended around zero.
\item[ii)] $P\in \ncp{\C}{X}x_1\oplus \C.1_{X^*}$.
\end{enumerate}
2) In this case $\Omega$ itself\footnote{The domain, for 
$z$ of $\Li_P$.} can be extended to 
$\Omega_1=\C\setminus(]-\infty,-1]\cup[1,+\infty[)$.
\end{proposition}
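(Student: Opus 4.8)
The plan is to analyze the local behaviour at $z=0$ of each $\Li_w$ appearing in $f$ and to show that the only source of a non-removable singularity (a branch point coming from $\log z$) is the family of words in $x_0^*$. First I would recall from the integral representation \eqref{Improper_Int} and the recursions \eqref{recursion} that for $w\in X^*x_1$ (or $w=1_{X^*}$) the function $\Li_w$ is given near $0$ by an absolutely convergent power series in $z$ — indeed $\Li_w(z)=\sum_{n\ge1}a_n z^n$ with $a_n$ bounded polynomially in $n$, since $\Li_w$ is holomorphic on the unit disc by \eqref{polylogarithm} once $w$ is rewritten via \eqref{re-index}. Hence every $\Li_w$ with $w$ ending in $x_1$ already extends holomorphically across $0$; this handles the implication ii)$\Rightarrow$i): if $P\in\ncp{\C}{X}x_1\oplus\C.1_{X^*}$ then $f$ is a finite $\C$-linear combination of such $\Li_w$ and a constant, so it extends analytically around $0$.

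For the converse i)$\Rightarrow$ii) I would argue by contradiction, or more cleanly by a triangularity/filtration argument on the words occurring in $P$. Write $P=P_1+P_0$ where $P_1\in\ncp{\C}{X}x_1\oplus\C.1_{X^*}$ collects the words ending in $x_1$ (plus the constant term) and $P_0$ collects the words ending in $x_0$. Since $\Li_{P_1}$ is already holomorphic near $0$ by the first paragraph, $f$ extends around $0$ iff $\Li_{P_0}$ does. Now every word ending in $x_0$ is of the form $ux_0^k$ with $k\ge1$ and either $u=1_{X^*}$ or $u$ ends in $x_1$; the key structural fact (from \eqref{Improper_Int} and \eqref{recursion}) is that $\Li_{ux_0^k}(z)$ has an asymptotic expansion at $0$ of the shape $\sum_{j=0}^{k} c_j(z)\,\frac{\log^j z}{j!}$ with $c_j$ holomorphic near $0$ and $c_k$ a nonzero constant when $u=1_{X^*}$, and more generally $c_k(0)=\scal{\L}{u}|_{z\to0}$-type leading coefficients that are linearly independent as $u$ ranges over the relevant words. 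Using the classical fact that the functions $\{1,\log z,\log^2 z,\dots\}$ together with convergent power series in $z$ are linearly independent over $\C$ (monodromy around $0$, or uniqueness of such asymptotic expansions), one concludes that $\Li_{P_0}$ is holomorphic near $0$ forces all coefficients of $P_0$ to vanish, i.e. $P_0=0$, giving ii).

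The main obstacle is the second paragraph: making precise and justifying the claim that $\{\Li_w\}_{w\in X^*x_0}$ (modulo functions holomorphic at $0$) are linearly independent — equivalently that no nontrivial combination of polylogarithms ending in $x_0$ can have its $\log$-singularities cancel. I would handle this via the monodromy operator: continuation of $\L(z)$ once around $0$ multiplies it (on the left) by $e^{2\pi i x_0}$, so the ``$\log z$-depth'' of $\Li_w$ is exactly the number of trailing $x_0$'s; applying the monodromy repeatedly and comparing $\log z$-degrees forces each graded piece of $P_0$ to vanish separately, and then within a fixed trailing-power $x_0^k$ one reduces, by integrating $k$ times (inverting \eqref{recursion}), to the already-proven $x_1$-case. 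Finally, part 2) is essentially immediate: once $P\in\ncp{\C}{X}x_1\oplus\C.1_{X^*}$, the only remaining obstruction to enlarging the domain comes from $\omega_1=(1-z)^{-1}dz$, whose singularity sits at $z=1$; the $\omega_0$-singularity at $0$ has been removed, so the integral representation \eqref{Improper_Int} can be performed along any path in $\C\setminus(]-\infty,-1]\cup[1,+\infty[)$ — one checks the integrals still converge and remain path-independent on this simply connected larger domain $\Omega_1$ — which yields the claimed extension.
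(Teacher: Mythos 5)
Your plan is viable but takes a genuinely different route from the paper. The paper does not split $P$ additively by its last letter: it invokes Radford's theorem \cite{Radford} to write $P=\sum_{k\ge0}\alpha_k\,(P_k\shuffle x_0^{\shuffle k})$ uniquely with $P_k\in\ncp{\C}{X}x_1\oplus\C.1_{X^*}$, uses the $\shuffle$-character property \eqref{Li_shuffle_mor0} to obtain $\Li_P(z)=\sum_k\alpha_k\Li_{P_k}(z)\log^k(z)$, and concludes by uniqueness of asymptotic expansions in the scale $z^n\log^m(z)$ as $z\to0^+$; no monodromy is needed. Your monodromy operator is the analytic shadow of the same structure ($\L=G(z)\,e^{x_0\log z}$ with $G$ holomorphic at $0$), but note that with the paper's convention $\der(S)=(\omega_0x_0+\omega_1x_1)S$ and the asymptotic condition of \eqref{DrinfeldSys}, one turn around $0$ multiplies $\L$ on the \emph{right} by $e^{2i\pi x_0}$, not the left; this is precisely what makes the $\log$-depth equal to the number of trailing $x_0$'s, as you correctly state, so this is a slip of side rather than a fatal error.

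The genuine gap is in your closing reduction for i)$\Rightarrow$ii). Comparing $\log$-degrees (by monodromy or asymptotics) only yields, for the maximal trailing power $K\ge1$ occurring in your $P_0$, that the holomorphic coefficient $\sum_{u\in X^*x_1\cup\{1_{X^*}\}}\scal{P_0}{ux_0^K}\,\Li_u(z)$ of $\log^K(z)/K!$ vanishes identically; to conclude that the scalars $\scal{P_0}{ux_0^K}$ vanish you need the $\C$-linear independence of the family $(\Li_u)_{u\in X^*x_1\cup\{1_{X^*}\}}$, i.e.\ injectivity of $\Li_\bullet$ on polynomials. This is not your ``already-proven $x_1$-case'' (which was mere holomorphy at $0$), and ``integrating $k$ times'' does not produce it; likewise your appeal to leading values $c_k(0)$ cannot work, since $c_k=\Li_u$ vanishes at $0$ for every $u\neq1_{X^*}$. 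You should invoke the known independence theorem explicitly (\cite{Linz}, see also \cite{MJOP}); the paper's sketch tacitly relies on the same ingredient when it discards the components with $k\ge1$. Finally, for part 2) the shortest argument is that $\Li_P$ is holomorphic on $\Omega$ and, by part 1), on the open unit disk, and $\Omega\cup\{\abs{z}<1\}=\Omega_1$; no re-examination of the integral representation \eqref{Improper_Int} is needed (your path argument would in fact give a larger domain, but $\Omega_1$ is all that is claimed).
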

\begin{proof}[Sketch] (ii) $\Longrightarrow$ (i) being straightforward, it remains to prove that  
(ii) $\Longrightarrow$ (i).\\ 
Let then $P\in \ncp{\C}{X}$ such that   
$f(z)=\scal{\L}{P}$ has a removable singularity at zero.
As a consequence of Radford's results \cite{Radford}, one can write down a basis of any free shuffle algebra in terms of Lyndon words. This implies that our polynomial reads 
\begin{equation}\label{RadDec}
P=\sum_{k\ge0}\alpha_k (P_k\shuffle x_0^{\shuffle\,k})\mbox{ with } 
\alpha_k\in \C,\ P_k\in \ncp{\C}{X}x_1\oplus \C.1_{X^*}
\end{equation}
the family $(P_k)_{k\ge0}$ being unique and finitely supported.\\
Using \eqref{RadDec} and \eqref{Li_shuffle_mor0}, we get 
$$
\Li_P(z)= \sum_{k\ge0}\alpha_k \Li_{P_k}(z)\log(z)^k
$$
the result now follows easily using asymptotic scale $x^n\log(x)^m$ along the axis $]0,+\infty[$ (and for 
$x\to 0_+$).   
\end{proof}

\bigskip
The second step will be provided by the following Proposition which says that, for appropriate series, the Taylor coefficients behave nicely.
\begin{proposition}\label{pre_dom_loc}
Let $S\in \ncs{\C}{X}x_1\oplus \C1_{X^*}$ such that
$S=\sum_{n\ge0}[S]_n$ where\\  
$[S]_n=\sum_{w\in X^*,\abs{w}=n}\scal{S}{w}w,$ ($[S]_n$ are the homogeneous components of $S$), we suppose that $0<R\le1$
and that $\sum\limits_{n\ge0}\Li_{[S]_n}$ is unconditionally convergent
(for the standard topology) within the open disk $\abs{z}<R$\footnote{With the definition given later \eqref{symb_loc} this amounts to say that\\ 
$
S\in \ncs{\C}{X}x_1\oplus \C1_{X^*}\cap Dom_R(\Li)\ .
$}.
Remarking that $\dfrac{1}{1-z}\sum\limits_{n\ge0}\Li_{[S]_n}(z)$
is unconditionally convergent in the same disk, we set 
\begin{eqnarray*}
\frac{1}{1-z}\sum_{n\ge0}\Li_{[S]_n}(z)=\sum_{N\ge0}
a_Nz^N\ .
\end{eqnarray*}
Then, for all $N\ge0$,
$
\sum\limits_{n\ge0}\H_{\pi_Y([S]_n)}(N)=a_N.
$
\end{proposition}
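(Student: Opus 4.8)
The plan is to reduce to the word-by-word case and then push the resulting identity through the unconditionally convergent sums, using that the natural linear operations on the space of germs of holomorphic functions at $0$ are continuous.

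First I would treat a single homogeneous component. Each $[S]_n$ is a finite $\C$-linear combination of words lying in $\ncp{\C}{X}x_1\oplus\C1_{X^*}$, which is exactly the space on which $\pi_Y$ is the inverse of $\pi_X$. For any such word $w$, Proposition \ref{removable1} shows $\Li_w$ extends analytically around $0$, hence so does $\Li_w(z)/(1-z)$ (division by $1-z$, which is nonzero at $0$), and the defining identity for harmonic sums recalled in the introduction gives $[z^N]\,\Li_w(z)/(1-z)=\H_{\pi_Y(w)}(N)$ — the case $w=1_{X^*}$ reducing to $1/(1-z)=\sum_{N}z^N$ with $\H_{1_{Y^*}}\equiv 1$. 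Since $\Li_\bullet$, $\pi_Y$ and $\H_\bullet$ are linear and $[S]_n$ is a finite combination, this yields, for all $n,N\ge0$,
\[
[z^N]\,\frac{\Li_{[S]_n}(z)}{1-z}=\H_{\pi_Y([S]_n)}(N).
\]

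Next I would pass to the infinite sum. Write $D=\{|z|<R\}$ and work in $\calH(D)$ with the topology of compact convergence; the hypothesis says precisely that the family $(\Li_{[S]_n})_{n\ge0}$ is summable there with sum $g$ (equivalently $S\in\ncs{\C}{X}x_1\oplus\C1_{X^*}\cap Dom_R(\Li)$, and summability is characterized by the Cauchy criterion as in Definition \ref{domLi} since $\calH(D)$ is complete). Because $0<R\le1$ we have $1/(1-z)\in\calH(D)$, so multiplication $M\colon f\mapsto f/(1-z)$ is a continuous linear endomorphism of $\calH(D)$; a continuous linear map carries a summable family to a summable family whose sum is the image of the sum, so $\bigl(\Li_{[S]_n}/(1-z)\bigr)_{n\ge0}$ is summable with sum $M(g)=g/(1-z)=\sum_{N\ge0}a_Nz^N$. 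This is exactly the ``Remarking that \dots'' clause of the statement. Then, for each $N$, the coefficient functional $T_N\colon f\mapsto[z^N]f=\tfrac1{2\pi i}\oint_{|z|=\rho}f(z)z^{-N-1}\,dz$ (any $0<\rho<R$) is continuous on $\calH(D)$, since $|T_N(f)|\le\rho^{-N}\sup_{|z|=\rho}|f|$. Applying $T_N$ termwise — legitimate, again, because a continuous linear map commutes with summable families — gives
\[
a_N=T_N\Bigl(\sum_{n\ge0}\frac{\Li_{[S]_n}}{1-z}\Bigr)=\sum_{n\ge0}T_N\Bigl(\frac{\Li_{[S]_n}}{1-z}\Bigr)=\sum_{n\ge0}\H_{\pi_Y([S]_n)}(N),
\]
the last step by the first part; in particular the right-hand series converges (unconditionally), which is implicitly part of the assertion.

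The hard part is not depth but bookkeeping: one must fix once and for all the topology in which ``unconditionally convergent for the standard topology within the disk'' is meant — namely summability in $\calH(D)$ for compact convergence — so that both $M$ and each $T_N$ are genuinely continuous on it, and then invoke cleanly the permanence principle that a continuous linear image of a summable family is summable with commuting sum. Everything else is the linear algebra of $\pi_X,\pi_Y,\Li_\bullet$ and $\H_\bullet$ on polynomials, for which the single-word identity recalled in the introduction (together with Proposition \ref{removable1} for analyticity at $0$) is all that is needed.
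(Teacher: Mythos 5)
Your proof is correct, and its skeleton matches the paper's: first the word-level identity $[z^N]\,\Li_w(z)/(1-z)=\H_{\pi_Y(w)}(N)$ for $w\in X^*x_1\cup\{1_{X^*}\}$, extended by linearity to each finite component $[S]_n$, then a passage to the infinite sum. Where you differ is in how that passage is justified. The paper expands each $\Li_{[S]_n}(z)/(1-z)$ into its Taylor series and interchanges $\sum_{n}$ with $\sum_{N}$ directly, invoking absolute convergence (its explicit remark at the step marked $(*)$ only addresses the finiteness of the inner sum over words of length $n$, so the interchange of the two infinite sums is left rather terse). You instead stay at the level of the summable family in $\calH(D_{<R})$ with the compact-convergence topology and apply two continuous linear maps --- multiplication by $1/(1-z)$ (using $R\le1$ so that $1-z$ does not vanish on the disk) and the Cauchy coefficient functionals $T_N$ --- together with the permanence principle that continuous linear maps commute with sums of summable families. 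This is cleaner and makes the topology explicit; it is in fact the same mechanism (Cauchy's formula plus compact convergence) that the paper itself deploys later, in the proof of point 2 of Theorem \ref{PropDomR}, so your argument can be seen as importing that technique into Proposition \ref{pre_dom_loc}. As a bonus, your route makes explicit that the numerical series $\sum_{n\ge0}\H_{\pi_Y([S]_n)}(N)$ converges unconditionally (hence absolutely) for each $N$, which the paper's statement asserts only implicitly.
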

\begin{proof} 
Let us recall that, for any $w \in X^*$, the function $(1-z)^{-1}\Li_w(z)$ is analytic in the open disk $|z|<R$. Moreover, one has 
\begin{eqnarray*}
\frac{1}{1-z}\Li_w(z)=\sum\limits_{N\ge0}\H_{\pi_Y(w)}(N)z^N.
\end{eqnarray*}
Since $[S]_n=\sum\limits_{w\in X^*,\abs{w}=n}\scal{S}{w}w$ and $(1-z)^{-1}\sum\limits_{n\ge0}\Li_{[S]_n}$ 
absolutely converges (for the standard topology\footnote{For this topology, unconditional
and absolute convergence coincide \cite{Sch}.}.) within the open disk $D_{<R}$, one obtains, for all $|z|<R$
\begin{eqnarray*}
\frac{1}{1-z}\sum_{n\ge0}\Li_{[S]_n}(z)
&=&\frac{1}{1-z}\sum\limits_{n\ge0}\ \sum\limits_{w\in X^*,\abs{w}=n}\scal{S}{w}w\Li_w(z)=
\sum\limits_{n\ge0}\ \sum\limits_{w\in X^*,\abs{w}=n}\scal{S}{w}w\dfrac{\Li_w(z)}{1-z}\cr
&=&\sum\limits_{n\ge0}\ \sum\limits_{w\in X^*,\abs{w}=n}\scal{S}{w}w\sum\limits_{N\ge0}\H_{\pi_Y(w)}(N)z^N \cr
&\substack{(*)\\=}&\sum\limits_{N\ge0}\ \sum_{n\ge0}\ \sum\limits_{w\in X^*,\abs{w}=n}\scal{S}{w}w\H_{\pi_Y(w)}(N)z^N
=\sum\limits_{N\ge0}\H_{\pi_Y([S]_n)}(N)z^N.
\end{eqnarray*}
$(*)$ being possible because 
$\sum\limits_{w\in X^*,\abs{w}=n}$ is finite.\\
This implies that, for any $N\ge0$,
$
a_N=\sum\limits_{n\ge0}\H_{\pi_Y([S]_n)}(N).
$
\end{proof}

%\begin{lemma}\label{surj_comb}
%For a letter ``$a$'', one has 
%\begin{eqnarray}\label{shuffle_pow}
%\abs{(a^+)^{\shuffle m}}{a^n}=m!S_2(n,m)
%\end{eqnarray}
%($S_2(n,m)$ being the Stirling numbers of the second kind).
%The exponential generating series of R.H.S. in equation \eqref{shuffle_pow} (w.r.t. $n$) is given by
%\begin{eqnarray}
%\label{12fold1}
%\sum_{n\ge0} m! S_2(n,m)\frac{x^n}{n!}=(e^x-1)^m.
%\end{eqnarray}
%\end{lemma}
%\tcp{This lemma comes ``out of the blue''. Do not suppress it but rather explain (or ask me to do this :) ``Why'' you need it there.}
%
%\begin{proof}$(a^+)^{\shuffle m}$ is the specialization of 
%\begin{eqnarray*}
%L_m=a_1^+\shuffle a_2^+\shuffle\ldots\shuffle a_m^+
%\end{eqnarray*}
%to $a_j\to a$ (for all $j=1,2\ldots m$). The words of $L_m$ are in bijection with the surjections 
%$[1\ldots n]\to [1\ldots m]$, therefore the coefficient $\left\langle  (a^+)^{\shuffle m} |  a^n \right\rangle $ is exactly 
%the number of such surjections namely $m!S_2(n,m)$. A classical formula\footnote{See \cite{Ric},
%the twelvefold way, formula (1.94b)(pp. 74) for instance.} says that  
%\begin{eqnarray}
%\label{12fold2}
%\sum_{n\ge0} m! S_2(n,m)\frac{x^n}{n!}=(e^x-1)^m.
%\end{eqnarray}
%\end{proof}

To prepare the construction of the ``symbolic local germ'' around zero, let us set, in the same manner as in \cite{CHM1,GHM1},
\begin{eqnarray}\label{symb_loc}
\Dom_R(\Li)&:=&\{S\in\ncs{\C}{X}x_1\oplus \C1_{X^*}\vert\cr
&&\sum_{n\ge0}\Li_{[S]_n} \mbox{is unconditionally convergent in $\calH(D_{<R})$}\}
\end{eqnarray}
and prove the following:
\newpage
\begin{proposition}\label{lem2}
With the notations as above, we have: 
\begin{enumerate}
\item The map given by $R\mapsto\Dom_R(\Li)$ from 
$]0,1]$ to $2^{\ncs{\C}{X}}$ (the target is the set of subsets\footnote{For any set $E$, the set of its subsets is noted $2^E$.} of $\ncs{\C}{X}$ ordered by inclusion) is strictly decreasing
\item Each $\Dom_R(\Li)$ is a shuffle (unital) subalgebra of $\ncs{\C}{X}$.
\end{enumerate}
\end{proposition}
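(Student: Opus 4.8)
My plan is to treat the two items separately, setting up at the outset the ingredients used repeatedly. For $0<\rho<1$ put $\|f\|_\rho:=\sup_{|z|\le\rho}|f(z)|$; the family $\{\|\cdot\|_\rho\}_{\rho<R}$ generates the topology of $\calH(D_{<R})$ and is submultiplicative, $\|fg\|_\rho\le\|f\|_\rho\|g\|_\rho$. Since $\calH(D_{<R})$ is a complete nuclear space, a family is unconditionally summable there iff it is absolutely summable, i.e.\ iff $\sum\|\cdot\|_\rho<\infty$ for every $\rho<R$. Finally, for $S\in\ncs{\C}{X}x_1\oplus\C1_{X^*}$ each homogeneous component $[S]_n$ involves only the finitely many length-$n$ words, hence is a \emph{polynomial} in $\ncp{\C}{X}x_1\oplus\C1_{X^*}$, so $\Li_{[S]_n}$ is defined by \pref{linext}, lies in $\calH(D_{<R})$ by Proposition~\ref{removable1}, and, extending \pref{Li_shuffle_mor0} bilinearly, $P\mapsto\Li_P$ is a shuffle morphism on $\ncp{\C}{X}$. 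With this in hand, item~1 (decrease) is easy: if $0<R_1\le R_2\le1$ then $D_{<R_1}\subseteq D_{<R_2}$ and the restriction $\calH(D_{<R_2})\to\calH(D_{<R_1})$ is linear and continuous (each defining seminorm of the target is one of the source), hence carries unconditionally summable families to unconditionally summable ones, giving $\Dom_{R_2}(\Li)\subseteq\Dom_{R_1}(\Li)$.

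For the \emph{strictness} I would produce a separating series. Fix $0<R_1<R_2\le1$, choose $R^\star\in{]R_1,R_2[}$, put $A:=-\log(1-R^\star)>0$ and take
\[
S_A:=\sum_{n\ge1}\frac{n!}{A^{n}}\,x_1^{\,n}\ \in\ \ncs{\C}{X}x_1 .
\]
From $x_1^{\shuffle\,n}=n!\,x_1^{\,n}$ and $\Li_{x_1}(z)=-\log(1-z)$ one gets $\Li_{x_1^{\,n}}(z)=(-\log(1-z))^n/n!$, hence $\Li_{[S_A]_n}(z)=\big(-A^{-1}\log(1-z)\big)^{n}$. Applying the triangle inequality to $-\log(1-z)=\sum_{k\ge1}z^k/k$ gives $|\log(1-z)|\le-\log(1-|z|)$, with equality for $z\in[0,1[$; therefore $\|\Li_{[S_A]_n}\|_\rho=\big(-A^{-1}\log(1-\rho)\big)^n$ and $\sum_{n\ge1}\|\Li_{[S_A]_n}\|_\rho$ is a geometric series, convergent exactly when $-\log(1-\rho)<A$, i.e.\ when $\rho<R^\star$. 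Hence $S_A\in\Dom_R(\Li)$ iff $R\le R^\star$; in particular $S_A\in\Dom_{R_1}(\Li)\setminus\Dom_{R_2}(\Li)$, so for every pair $0<R_1<R_2\le1$ the inclusion $\Dom_{R_2}(\Li)\subsetneq\Dom_{R_1}(\Li)$ is strict, which is item~1.

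For item~2 I observe first that the ambient space $\ncs{\C}{X}x_1\oplus\C1_{X^*}$ is already a shuffle subalgebra of $\ncs{\C}{X}$ (a shuffle of two series ending in $x_1$ again ends in $x_1$, and $1_{X^*}$ is the unit), that $1_{X^*}\in\Dom_R(\Li)$ because the defining family reduces to the single nonzero term $\Li_{1_{X^*}}$ (the constant function $1$), and that for $S,T\in\Dom_R(\Li)$, $\lambda\in\C$ one has $\Li_{[\lambda S+T]_n}=\lambda\Li_{[S]_n}+\Li_{[T]_n}$, whence $\sum_n\|\Li_{[\lambda S+T]_n}\|_\rho\le|\lambda|\sum_n\|\Li_{[S]_n}\|_\rho+\sum_n\|\Li_{[T]_n}\|_\rho<\infty$ for all $\rho<R$; so $\Dom_R(\Li)$ is a unital subspace. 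The real point is closure under $\shuffle$. As the shuffle is length-graded, $[S\shuffle T]_n=\sum_{i+j=n}[S]_i\shuffle[T]_j$ (a finite sum), so $\Li_{[S\shuffle T]_n}=\sum_{i+j=n}\Li_{[S]_i}\Li_{[T]_j}$ by the polynomial shuffle-morphism property. I would then argue: for each $\rho<R$, submultiplicativity gives $\sum_{i,j\ge0}\|\Li_{[S]_i}\Li_{[T]_j}\|_\rho\le\big(\sum_i\|\Li_{[S]_i}\|_\rho\big)\big(\sum_j\|\Li_{[T]_j}\|_\rho\big)<\infty$, so the double family $\big(\Li_{[S]_i}\Li_{[T]_j}\big)_{(i,j)\in\N^2}$ is absolutely, hence unconditionally, summable in $\calH(D_{<R})$; grouping its terms along the finite anti-diagonals $\{i+j=n\}$ (associativity of summation in a complete space) then shows that $\big(\Li_{[S\shuffle T]_n}\big)_{n\ge0}$ is summable, i.e.\ $S\shuffle T\in\Dom_R(\Li)$; and summing first over $i$ (continuity of multiplication by a fixed element) and then over $j$ identifies the total sum as $\Li_S\Li_T$.

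The step I expect to require the most care is this last paragraph: turning the informal statement ``a Cauchy product of two absolutely summable families is again absolutely summable and factorises'' into a rigorous argument inside the Fr\'echet algebra $\calH(D_{<R})$ --- which rests precisely on completeness, on the submultiplicativity of the generating seminorms, and on the grouping lemma for summable families --- together with the upstream use of nuclearity needed to pass from the hypothesis ``$\sum_n\Li_{[S]_n}$ unconditionally convergent'' to the workable bound ``$\sum_n\|\Li_{[S]_n}\|_\rho<\infty$ for all $\rho<R$''. The monotonicity and the strictness example are, by contrast, routine once the formula $\Li_{x_1^{\,n}}=(-\log(1-z))^n/n!$ and the inequality $|\log(1-z)|\le-\log(1-|z|)$ are available.
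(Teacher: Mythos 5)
Your proposal is correct, and on item~2 it runs along essentially the same lines as the paper (which mostly defers to \cite{GHM1}): nuclearity/completeness of $\calH(D_{<R})$ to convert unconditional into absolute summability for the sup-seminorms, the shuffle-morphism property of $\Li_\bullet$ on polynomials, stability of $\ncs{\C}{X}x_1\oplus\C 1_{X^*}$ under $\shuffle$, and a Cauchy-product/regrouping argument; your write-up just makes explicit what the paper leaves as a citation. Where you genuinely diverge is the strictness part of item~1. The paper separates $\Dom_{R_1}(\Li)$ from $\Dom_{R_2}(\Li)$ with $S(t)=\sum_{m\ge0}t^m(x_1^+)^{\shuffle m}$, $x_1^+=x_1^*-1_{X^*}$, and needs the combinatorial Lemma~\ref{surj_comb} ($\scal{(x_1^+)^{\shuffle m}}{x_1^n}=m!\,S_2(n,m)$, counting surjections) together with the exponential generating function $(e^x-1)^m$ to sum the seminorms to $\frac{1-\abs{z}}{1-(t+1)\abs{z}}$, so that the cut-off radius is $(t+1)^{-1}$ and one tunes $t$ between $R_2^{-1}-1$ and $R_1^{-1}-1$. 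You instead take $S_A=\sum_{n\ge1}n!\,A^{-n}x_1^{\,n}$ with $A=-\log(1-R^\star)$, for which $\Li_{[S_A]_n}(z)=\bigl(-A^{-1}\log(1-z)\bigr)^n$ directly, so the seminorm series is geometric with cut-off exactly at $R^\star$; this is more elementary and self-contained, avoiding the Stirling-number lemma entirely, and your remark that membership fails at the boundary radius ($R\le R^\star$ versus $R>R^\star$) is handled cleanly. What the paper's choice buys in exchange is that its witness is built from the shuffle powers of the rational series $x_1^+$, in line with the article's Kleene-star theme, and that Lemma~\ref{surj_comb} is machinery it needs anyway for Theorem~\ref{PropDomR}, so the example comes at no extra cost there. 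Both arguments are valid; no gap to report.
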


\begin{proof}
\begin{enumerate}
\item 
For $0<R_1<R_2\leq 1$ it is straightforward that
$
\Dom_{R_2}(\Li)\subset\Dom_{R_1}(\Li).
$
Let us prove that the inclusion is strict. Take $\abs{z}<1$ and let us, be it finite or infinite, evaluate the sum 
\begin{eqnarray*}
M(z)=\sum_{n\ge0}\abs{\Li_{[S]_n(t)}(z)}=\sum_{n\ge0}\scal{S(t)}{x_1^n}\abs{\Li_{x_1^n}(z)}
\end{eqnarray*}
then, by means of Lemma \ref{surj_comb}, with $x_1^+=x_1x_1^*=x_1^*-1$ and 
$S(t)=\sum\limits_{m\ge0}t^m(x_1^+)^{\shuffle m}$, we have
\begin{eqnarray*}
M(z)&=&\sum\limits_{n\ge0}\abs{S(t)}{x_1^n}\abs{\Li_{x_1^n}(z)}=\sum\limits_{n\ge0}\sum\limits_{m\ge0}\abs{t^m(x_1^+)^{\shuffle m}}{x_1^n}\abs{\Li_{x_1^n}(z)}\cr
&=&\sum\limits_{m\ge0}m!t^m\sum_{n\ge0}S_2(n,m)\frac{\abs{\Li_{x_1}(z)}^n}{n!}\le \sum\limits_{m\ge0}m!t^m\sum\limits_{n\ge0}S_2(n,m)\dfrac{\Li^n_{x_1}(\abs{z})}{n!},
\end{eqnarray*}
due to the fact that $\abs{\Li_{x_1}(z)}\le\Li_{x_1}(\abs{z})$ (Taylor series with positive coefficients).
Finally, in view of equation (\ref{12fold2}), we get, on the one hand, for $\abs{z}<(t+1)^{-1}$, 
\begin{eqnarray*}
M(z)\le\sum\limits_{m\ge0}t^m(e^{\Li_{x_1}(\abs{z})}-1)^m 
=\sum\limits_{m\ge0}t^m(\dfrac{\abs{z}}{1-\abs{z}})^m=\dfrac{1-\abs{z}}{1-(t+1)\abs{z}}.
\end{eqnarray*}
This proves that, for all $r\in ]0,\dfrac{1}{t+1}[$, 
$
\sum\limits_{n\ge0}\absv{\Li_{[S]_n(t)}(z)}_r<+\infty.
$

On the other hand, if $(t+1)^{-1}\le\abs{z}<1$, one has $M(|z|)=+\infty$,
and the preceding calculation shows that, with $t$ choosen such that
\begin{eqnarray*}
0\le\dfrac{1}{R_2}-1<t<\dfrac{1}{R_1}-1,
\end{eqnarray*}
we have $S(t)\in\Dom_{R_1}(\Li)$ but $S(t)\notin\Dom_{R_2}(\Li)$ whence, for $0<R_1<R_2\le1$, 
$\Dom_{R_2}(\Li) \subsetneq\Dom_{R_1}(\Li)$.

\item  One has (proofs as in \cite{GHM1})
\begin{enumerate}
\item $1_{X^*}\in\Dom_R(\Li)$ (because $1_{X^*}\in\ncp{\C}{X}$) and $\Li_{1_{X^*}}=1_{\calH(\Omega)}$.
\item Taking $S,T\in\Dom_R(\Li)$ we have, by absolute convergence, $S\shuffle T\in\Dom_R(\Li)$.
It is easily seen that $S\shuffle T\in\ncs{\C}{X}x_1\oplus \C1_{X^*} $ and, moreover, that 
$\Li_S\Li_T=\Li_{S\shuffle T}$\footnote{Proof by absolute convergence as in \cite{GHM1}.}. 
\end{enumerate}
\end{enumerate}
\end{proof} 

The combinatorial Lemma needed in the Theorem  \ref{PropDomR} is as follows 

\begin{lemma}\label{surj_comb}
For a letter ``$a$'', one has 
\begin{eqnarray}\label{shuffle_pow}
\abs{(a^+)^{\shuffle m}}{a^n}=m!S_2(n,m)
\end{eqnarray}
($S_2(n,m)$ being the Stirling numbers of the second kind).
The exponential generating series of R.H.S. in equation \eqref{shuffle_pow} (w.r.t. $n$) is given by
\begin{eqnarray}
\label{12fold1}
\sum\limits_{n\ge0} m! S_2(n,m)\dfrac{x^n}{n!}=(e^x-1)^m.
\end{eqnarray}
\end{lemma}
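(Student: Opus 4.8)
The plan is to prove the two assertions in turn, starting from the combinatorial identity \eqref{shuffle_pow} and then deriving the exponential generating function \eqref{12fold1} from it. For the first identity, recall that $a^+ = a a^* = \sum_{k\ge 1} a^k$, so that the shuffle power $(a^+)^{\shuffle m}$ is a sum over $m$-tuples $(k_1,\ldots,k_m)$ with each $k_i \ge 1$ of the shuffle products $a^{k_1}\shuffle \cdots \shuffle a^{k_m}$. Over a one-letter alphabet the shuffle product is ``commutative concatenation with multiplicities'': we have $a^{k_1}\shuffle\cdots\shuffle a^{k_m} = \binom{k_1+\cdots+k_m}{k_1,\ldots,k_m}\, a^{k_1+\cdots+k_m}$ (the multinomial coefficient counting the interleavings). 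Hence the coefficient $\scal{(a^+)^{\shuffle m}}{a^n}$ equals the sum of these multinomial coefficients over all compositions $(k_1,\ldots,k_m)$ of $n$ into exactly $m$ positive parts. That this sum equals $m!\,S_2(n,m)$ is the classical ``surjection'' interpretation: $m!\,S_2(n,m)$ counts the surjections from an $n$-element set onto an $m$-element set, and grouping such a surjection by the ordered sequence of fiber sizes $(k_1,\ldots,k_m)$ gives exactly $\sum \binom{n}{k_1,\ldots,k_m}$ over positive compositions. So the first step is to make this bijection precise.

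For the second identity, I would compute the exponential generating series directly. Since $(a^+)^{\shuffle m}$ is the $m$-th shuffle power of $a^+$, and the exponential generating function turns shuffle products into ordinary products of exponential generating series (the shuffle product being precisely the product dual to the coproduct that the exponential map linearizes), it suffices to know the exponential generating series attached to $a^+$ itself, namely $\sum_{n\ge 1} \scal{a^+}{a^n}\, x^n/n! = \sum_{n\ge 1} x^n/n! = e^x - 1$. Taking the $m$-th power yields $(e^x-1)^m = \sum_{n\ge 0} m!\,S_2(n,m)\, x^n/n!$, which is \eqref{12fold1}; this also re-proves \eqref{shuffle_pow} by extracting coefficients, since $(e^x-1)^m/m! = \sum_{n\ge 0} S_2(n,m)\,x^n/n!$ is the standard exponential generating function of the Stirling numbers of the second kind.

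Alternatively, and perhaps more cleanly for a self-contained write-up, one can first establish \eqref{12fold1} as the defining EGF relation for $S_2(n,m)$ (or cite it), then read off \eqref{shuffle_pow} from the shuffle-to-product compatibility of the exponential generating map over a one-letter alphabet. Either route is short; the only point requiring care is the statement that over a one-letter alphabet $\serie{\C}{a}$ the exponential generating function map $\sum c_n a^n \mapsto \sum c_n x^n/n!$ intertwines $\shuffle$ with ordinary multiplication of power series, which follows from $a^p \shuffle a^q = \binom{p+q}{p} a^{p+q}$ and the Cauchy product formula. I do not anticipate a genuine obstacle here; the main thing to get right is the bookkeeping in the bijection between positive integer compositions of $n$ weighted by multinomial coefficients and surjections onto $[m]$, so that the appearance of $m!\,S_2(n,m)$ (rather than $S_2(n,m)$) is correctly accounted for.
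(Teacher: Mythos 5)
Your proposal is correct, and it arrives at the same surjection count as the paper, but by a different mechanism. The paper proves \eqref{shuffle_pow} by a polarization argument: it views $(a^+)^{\shuffle m}$ as the specialization $a_j\to a$ of $a_1^+\shuffle a_2^+\shuffle\cdots\shuffle a_m^+$ over $m$ distinct letters, and observes that the length-$n$ words of that product are in bijection with the surjections $\{1,\ldots,n\}\to\{1,\ldots,m\}$ (each position is labelled by the factor it comes from, every factor contributing at least one letter), so the coefficient is $m!S_2(n,m)$ in one stroke; for \eqref{12fold1} it simply cites the twelvefold-way formula from Stanley. You instead expand $a^+=\sum_{k\ge1}a^k$, use the one-letter shuffle identity $a^{k_1}\shuffle\cdots\shuffle a^{k_m}=\frac{n!}{k_1!\cdots k_m!}\,a^{n}$ for $k_1+\cdots+k_m=n$, and then recognize the resulting sum over positive compositions as the number of surjections by grouping surjections according to their ordered fiber sizes --- a more computational but equally valid route to the same count, and your bookkeeping of the factor $m!$ is right. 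Your second route is the genuinely different ingredient: noting that $a^n\mapsto x^n/n!$ intertwines $\shuffle$ with the ordinary product of power series (because $a^p\shuffle a^q=\frac{(p+q)!}{p!\,q!}a^{p+q}$), the EGF of $(a^+)^{\shuffle m}$ is $(e^x-1)^m$, which yields \eqref{12fold1} --- and, granting the standard Stirling EGF, re-derives \eqref{shuffle_pow} by coefficient extraction --- structurally rather than by citation; the external fact you invoke (the EGF of $S_2(n,m)$) is the same one the paper cites. Either write-up works: the paper's polarization bijection is shorter for \eqref{shuffle_pow}, while your EGF-morphism observation makes the passage between the two displayed identities transparent.
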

\begin{proof} The expression $(a^+)^{\shuffle m}$ is the specialization of 
\begin{eqnarray*}
L_m=a_1^+\shuffle a_2^+\shuffle\ldots\shuffle a_m^+
\end{eqnarray*}
to $a_j\to a$ (for all $j=1,2\ldots m$). The words of $L_m$ are in bijection with the surjections 
$[1\ldots n]\to [1\ldots m]$, therefore the coefficient $\left\langle  (a^+)^{\shuffle m} |  a^n \right\rangle $ is exactly 
the number of such surjections namely $m!S_2(n,m)$. A classical formula\footnote{See \cite{RPS},
the twelvefold way, formula (1.94b)(pp. 74) for instance.} says that  
\begin{eqnarray}
\label{12fold2}
\sum\limits_{n\ge0} m! S_2(n,m)\dfrac{x^n}{n!}=(e^x-1)^m.
\end{eqnarray}
\end{proof}

In Theorem \ref{PropDomR} below, we study, for series taken in $\ncs{\C}{X}x_1\oplus\C.1_{X^*}$, the correspondence 
$\Li_\bullet$ to some $\calH(D_{<R})$, first (point 1) establishes its surjectivity (in a certain sense) and then
(points 2 and 3) examine the relation between summability of the functions and that of their Taylor coefficients.
For that, let us begin with a very general Lemma on sequences of Taylor series which adapts, for our needs,
the notion of \textit{normal families} as in \cite{Montel}.
\begin{lemma}\label{TaylSeq}
Let $\tau=(a_{n,N})_{n,N\ge0}$ be a double sequence of complex numbers. Setting
\begin{eqnarray*}
I(\tau):=\{r\in]0,+\infty[\vert\sum_{n,N\ge0}|a_{n,N}r^N|<+\infty\},
\end{eqnarray*}
one has
\begin{enumerate} 
\item $I(\tau)$ is an interval of $]0,+\infty[$, it is not empty iff there exists $z_0\in\C\setminus\{0\}$ such that 
\begin{eqnarray}\label{NonZeroRad}
\sum_{n,N\ge0}|a_{n,N}z_0^N|<+\infty
\end{eqnarray}
In this case, we set $R(\tau):=\sup(I(\tau))$, one has 
\begin{enumerate}
\item For all $N$, the series $\sum\limits_{n\ge0}a_{n,N}$ converges absolutely (in $\C$). 
Let us note $a_N$ - with one subscript - its limit
\item For all $n$, the convergence radius of the Taylor series 
$
T_n(z)=\sum_{N\ge0}a_{n,N}z^N
$
is at least $R(\tau)$ and $\sum\limits_{n\in\N}T_n$ is summable for the standard
topology of $\calH(D_{<R(\tau)})$ with sum $T(z)=\sum\limits_{n,N\ge0}a_{N}z^N$.
\end{enumerate}

\item Conversely, we suppose that it exists $R>0$ such that 
\begin{enumerate}
\item Each Taylor series $T_n(z)=\sum\limits_{N\ge0}a_{n,N} z^N$ converges in $\calH(D_{<R})$.
\item The series $\sum\limits_{n\in\N}T_n$ converges unconditionnally in $\calH(D_{<R})$.
\end{enumerate}
Then $I(\tau)\not=\emptyset$ and $R(\tau)\ge R$.
\end{enumerate}
\end{lemma}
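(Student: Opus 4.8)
The plan is to treat the two directions of Lemma~\ref{TaylSeq} separately, since the claim is essentially an Abel-type summability statement for double Taylor series, dressed up in the language of summable families in $\calH(D_{<R})$.

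For part~1, I would first check that $I(\tau)$ is an interval: if $r\in I(\tau)$ and $0<r'\le r$, then $|a_{n,N}(r')^N|\le|a_{n,N}r^N|$ termwise, so $\sum_{n,N}|a_{n,N}(r')^N|<+\infty$, hence $r'\in I(\tau)$. The nonemptiness criterion \eqref{NonZeroRad} follows by the same monotonicity: any $z_0\ne0$ with $\sum_{n,N}|a_{n,N}z_0^N|<+\infty$ gives $|z_0|\in I(\tau)$ (and conversely any $r\in I(\tau)$ furnishes such a $z_0$). Now fix $r\in I(\tau)$ and set $R(\tau)=\sup I(\tau)$. For (a): since $\sum_{n,N}|a_{n,N}r^N|<+\infty$, in particular $\sum_n|a_{n,N}|\,r^N<+\infty$ for each fixed $N$, hence $\sum_n a_{n,N}$ converges absolutely in $\C$; call the sum $a_N$. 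For (b): fix any $R'<R(\tau)$ and pick $r\in I(\tau)$ with $R'<r$. For $|z|\le R'$ one has $\sum_{n,N}|a_{n,N}z^N|\le\sum_{n,N}|a_{n,N}|(R')^N\le\sum_{n,N}|a_{n,N}|r^N<+\infty$, so the double family $(a_{n,N}z^N)_{n,N}$ is absolutely summable, uniformly on the compact disk $|z|\le R'$. By the standard Weierstrass/Fubini argument for absolutely convergent families, each $T_n(z)=\sum_N a_{n,N}z^N$ converges on $|z|<R(\tau)$, the family $(T_n)_{n}$ is summable in $\calH(D_{<R(\tau)})$ (its partial sums are uniformly Cauchy on every compact subdisk), and one may reorder the summation: $\sum_n T_n(z)=\sum_N\bigl(\sum_n a_{n,N}\bigr)z^N=\sum_N a_N z^N$, which is the asserted sum $T$. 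Summation radius of $T_n$ is at least $R(\tau)$ because $R'<R(\tau)$ was arbitrary.

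For part~2 (the converse), suppose $R>0$ is such that each $T_n$ lies in $\calH(D_{<R})$ and $\sum_n T_n$ converges unconditionally there. Fix $0<r<R$ and let $K=\{|z|\le r\}$, a compact subset of $D_{<R}$. Unconditional convergence in the Fréchet space $\calH(D_{<R})$, which is nuclear (this is exactly the property invoked earlier in the paper, see the discussion before Definition~\ref{domLi} and \cite{Sch}), coincides with absolute convergence with respect to the defining seminorms; in particular $\sum_n p_K(T_n)<+\infty$, i.e. $\sum_n\sup_{|z|\le r}|T_n(z)|<+\infty$. By Cauchy's estimate applied to $T_n$ on the disk of radius $r$, $|a_{n,N}|\le r^{-N}\sup_{|z|\le r}|T_n(z)|=r^{-N}p_K(T_n)$, whence for any $r'<r$, $\sum_{n,N}|a_{n,N}(r')^N|\le\sum_{n,N}(r'/r)^N p_K(T_n)=\bigl(\sum_N(r'/r)^N\bigr)\bigl(\sum_n p_K(T_n)\bigr)<+\infty$. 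Thus $r'\in I(\tau)$, so $I(\tau)\ne\emptyset$ and $R(\tau)=\sup I(\tau)\ge r'$. Letting $r'\uparrow r$ and then $r\uparrow R$ gives $R(\tau)\ge R$.

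The only genuinely delicate point is the passage, in both directions, between ``unconditional/summable in $\calH(D_{<R})$'' and ``absolutely summable termwise with respect to the seminorms $p_K$''. This is precisely where nuclearity of $\calH(\Omega)$ (hence of $\calH(D_{<R})$) is used — it is the same structural fact the authors have already flagged as underpinning the shuffle-stability of $\Dom(\Li)$ — so I would simply cite \cite{Sch} for the equivalence rather than reprove it. Everything else is a routine combination of monotone majorization, Cauchy's inequality, and Fubini for absolutely convergent double series.
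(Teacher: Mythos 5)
Your proposal is correct and follows essentially the same route as the paper's proof: part 1 by termwise monotone majorization plus Fubini for absolutely summable double families, and part 2 by Cauchy estimates on a circle of radius $r_1$ with $r<r_1<R$ together with the nuclearity fact that unconditional convergence in $\calH(D_{<R})$ gives $\sum_n p_K(T_n)<+\infty$ (which the paper uses implicitly and you rightly make explicit, citing \cite{Sch}).
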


\begin{proof}
\begin{enumerate}
\item The fact that $I(\tau)\subset]0,+\infty[$ is straightforward from the Definition.
If it exists $z_0\in \C$ such that
$
\sum_{n,N\ge0}\abs{a_{n,N}z_0^N}<+\infty
$
then, for all $r\in]0,|z_0|[$, we have 
\begin{eqnarray*}
\sum_{n,N\ge0}\abs{a_{n,N}r^N}
=\sum_{n,N\ge0}\abs{a_{n,N}z_0^N}\biggl(\frac{r}{\abs{z_0}}\biggr)^N
\le\sum_{n,N\ge0}|a_{n,N}z_0^N|<+\infty
\end{eqnarray*}
in particular $I(\tau)\not=\emptyset$ and it is an interval of $]0,+\infty[$ with lower bound zero.     
\begin{enumerate}
\item Take $r\in I(\tau)$ (hence $r\not=0$) and $N\in \N$, then we get the expected result as
\begin{eqnarray*}
r^N\sum_{n\ge0}\abs{a_{n,N}}=\sum_{n\ge0}\abs{a_{n,N}r^N}\le\sum_{n,N\ge0}\abs{a_{n,N}r^N}<+\infty.
\end{eqnarray*}
\item Again, take any $r\in I(\tau)$ and $n\in\N$, then 
$
\sum\limits_{N\ge0}\abs{a_{n,N}r^N}<+\infty
$
which proves that $R(T_n)\ge r$, hence the result\footnote{For a Taylor series $T$,
we note $R(T)$ the radius of convergence of $T$.}. We also have 
\begin{eqnarray*}
\abs{\sum\limits_{N\ge0}a_Nr^N}\le\sum\limits_{N\ge0}r^N\abs{\sum\limits_{n\ge0}a_{n,N}}\le\sum\limits_{n,N\ge0}\abs{a_{n,N}r^N}<+\infty
\end{eqnarray*}
and this proves that $R(T)\ge r$, hence $R(T)\ge R(\tau)$.
\end{enumerate}
\item Let $0<r<r_1<R$ and consider the path $\gamma(t)=r_1e^{2i\pi t}$, we have
\begin{eqnarray*}
\abs{a_{n,N}}=\abs{\dfrac{1}{2i\pi}\int_\gamma\dfrac{T_n(z)}{z^{N+1}}dz}\le 
\dfrac{2\pi}{2\pi}\dfrac{r_1\absv{T_n}_{K}}{r_1^{N+1}}\le\dfrac{\absv{T_n}_K}{r_1^{N}} 
\end{eqnarray*} 
with $K=\gamma([0,2\pi])$, hence
\begin{eqnarray*}
\sum\limits_{n,N\ge0}\abs{a_{n,N}r^N}\le\sum\limits_{n,N\ge0}\abs{T_n}_K(\dfrac{r}{r_1})^N\le\dfrac{r_1}{r_1-r}\sum\limits_{n\ge0}\absv{T_n}_K<+\infty.
\end{eqnarray*}
\end{enumerate}
\end{proof}

\begin{remark}
\begin{enumerate}
\item[(i)] First point says that every function analytic at zero can be represented around zero as $\Li_S(z)$ for 
some $S\in\ncs{\C}{x_1}$.
\item[(ii)] In point 2, the arithmetic functions 
$\H_{\pi_{Y}(S)}\in \Q^\N$, for $S\in Dom(\Li)$ are quickly defined (and in a way extending the old definition) and we draw a very important bound saying  that, in this condition, for some $r>0$ the array 
$\left(\H_{\pi_Y([S]_n)}(N)r^N\right)_{n,N}$ converges (then, in particular, horizontally and vertically).
\item[(iii)] Point 3 establishes the converse.   
\end{enumerate}
\end{remark}

%\newpage  
\begin{theorem}\label{PropDomR}
\begin{enumerate} 
\item Let $T(z)=\sum\limits_{N\ge0}a_Nz^N$ be a Taylor series converging on some non-empty disk centered at zero \textit{i.e.} such that 
$
\limsup_{N\to+\infty}\abs{a_N}^{\frac{1}{n}}=B<+\infty,
$
then the series 
\begin{eqnarray}\label{preim_series1}
S=\sum_{N\ge0}a_N(-(-x_1)^+)^{\shuffle N}
\end{eqnarray} 
is summable  in $\ncs{\C}{X}$ (with sum in $\ncs{\C}{x_1}$),
$S\in\Dom_R(\Li)$ with $R=(B+1)^{-1}$ and $\Li_S=T$.
\item Let $S\in\Dom_{R}(\Li)$ and $S=\sum\limits_{n\ge0}[S]_n$ (homogeneous decomposition), we 
define\\ $N\longmapsto\H_{\pi_Y(S)}(N)$ by\footnote{This Definition is compatible with the old one 
when $S$ is a polynomial.}
\begin{equation}\label{absconv1}
\dfrac{\Li_S(z)}{1-z}=\sum\limits_{N\ge0}\H_{\pi_{Y}(S)}(N)z^N.
\end{equation} 
Then,
\begin{eqnarray}\label{abs_sum0}
\forall r\in]0,R[,&&\sum\limits_{n,N\ge0}\abs{\H_{\pi_Y([S]_n)}(N)r^N}<+\infty.
\end{eqnarray} 
In particular, for all $N \in \N$, the series (of complex numbers), 
$\sum\limits_{n\ge0}\H_{\pi_Y([S]_n)}(N)$ converges absolutely to $\H_{\pi_Y(S)}(N)$. 
\item\label{QinDomH}  Conversely, let $Q \in \ncs{\C}{Y}$ with $Q=\sum\limits_{n\ge0}Q_n$
(decomposition by weights), we suppose that it exists $r\in ]0,1]$ such that 
\begin{eqnarray}\label{abs_sum2}
\sum\limits_{n,N\ge0}\abs{\H_{Q_n}(N)r^N}<+\infty,
\end{eqnarray}
in particular, for all $N\in \N$, $\sum\limits_{n\ge0}\H_{Q_n}(N)=\ell(N)\in\C$
unconditionally. Under such circumstances, $\pi_X(Q)\in\Dom_r(\Li)$ and, for all $z\in\C,\abs{z}\le r$,
\begin{eqnarray}\label{L2H_corresp2}
\frac{\Li_S(z)}{1-z}=\sum_{N\ge0}\ell(N)z^N,
\end{eqnarray}
\end{enumerate}
\end{theorem}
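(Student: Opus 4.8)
The plan is to treat the three points largely as consequences of Lemma~\ref{TaylSeq}, using the correspondence between the families $(\Li_{[S]_n})_n$ and the double array $(\H_{\pi_Y([S]_n)}(N))_{n,N}$, which is exactly the setting of that Lemma once we observe that $\dfrac{\Li_w(z)}{1-z}=\sum_{N\ge0}\H_{\pi_Y(w)}(N)z^N$ has nonnegative-indexed Taylor coefficients. For point~1, I would first verify summability of $S=\sum_{N\ge0}a_N(-(-x_1)^+)^{\shuffle N}$ in $\ncs{\C}{X}$: expanding $(-(-x_1)^+)^{\shuffle N}=\sum_{n\ge0}c_{n,N}x_1^n$ and using Lemma~\ref{surj_comb} (the coefficient of $x_1^n$ in $(x_1^+)^{\shuffle N}$ is $N!\,S_2(n,N)$, up to the sign bookkeeping coming from $-(-x_1)^+$), one bounds $\sum_n|c_{n,N}|$ and then, with $\limsup|a_N|^{1/N}=B$, checks that for $|z|<(B+1)^{-1}$ the family $(\Li_{[S]_n})_n$ is absolutely convergent — this uses precisely the computation already carried out in the proof of Proposition~\ref{lem2}, where $\sum_m t^m(e^{\Li_{x_1}(|z|)}-1)^m$ converges for $|z|<(t+1)^{-1}$, now with $t$ playing the role of $B$. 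That gives $S\in\Dom_R(\Li)$ with $R=(B+1)^{-1}$. The identity $\Li_S=T$ then follows because, by construction, $\Li_{(-(-x_1)^+)^{\shuffle N}}=\big(\Li_{-(-x_1)^+}\big)^N$ by the shuffle-morphism property \eqref{Li_shuffle_mor}, and $\Li_{-(-x_1)^+}(z)=z$, so $\Li_S(z)=\sum_N a_N z^N=T(z)$; one must only justify the term-by-term passage through the infinite sum, which is legitimate on $|z|<R$ by the absolute convergence just established.

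For point~2, I would apply Lemma~\ref{TaylSeq}(1) to the array $a_{n,N}:=\H_{\pi_Y([S]_n)}(N)$. By hypothesis $S\in\Dom_R(\Li)$, i.e. $\sum_n\Li_{[S]_n}$ converges unconditionally in $\calH(D_{<R})$, and each $\dfrac{\Li_{[S]_n}(z)}{1-z}=\sum_N a_{n,N}z^N$ converges in $\calH(D_{<R})$ as well (division by the analytic, nonvanishing $1-z$ on $D_{<R}$ since $R\le1$); moreover unconditional convergence of $\sum_n\Li_{[S]_n}$ transfers to unconditional convergence of $\sum_n(1-z)^{-1}\Li_{[S]_n}$, because multiplication by $(1-z)^{-1}$ is continuous on $\calH(D_{<R})$. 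Hence the hypotheses of Lemma~\ref{TaylSeq}(2) are met, giving $I(\tau)\neq\emptyset$ and $R(\tau)\ge R$; then Lemma~\ref{TaylSeq}(1)(a)--(b) delivers \eqref{abs_sum0} for every $r\in]0,R[$ and the absolute convergence of $\sum_n\H_{\pi_Y([S]_n)}(N)$. That the limit equals $\H_{\pi_Y(S)}(N)$ as defined by \eqref{absconv1} is exactly Proposition~\ref{pre_dom_loc} (with $a_N$ there equal to $\H_{\pi_Y(S)}(N)$ here), so point~2 reduces to invoking that Proposition once the summability is in place.

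For point~3, the converse, I would run Lemma~\ref{TaylSeq} in the other direction: set $a_{n,N}:=\H_{Q_n}(N)$, so that hypothesis \eqref{abs_sum2} says $r\in I(\tau)$, whence $R(\tau)\ge r$. By Lemma~\ref{TaylSeq}(1)(b) the Taylor series $T_n(z)=\sum_N\H_{Q_n}(N)z^N=\dfrac{\Li_{\pi_X(Q_n)}(z)}{1-z}$ have radius at least $r$ and $\sum_n T_n$ is summable in $\calH(D_{<r})$ with sum $\sum_N\ell(N)z^N$. Multiplying back by $1-z$ (continuous on $\calH(D_{<r})$) shows $\sum_n\Li_{\pi_X(Q_n)}$ is summable in $\calH(D_{<r})$, which is precisely $\pi_X(Q)\in\Dom_r(\Li)$ (note $\pi_X(Q)\in\ncs{\C}{X}x_1\oplus\C1_{X^*}$ since $\pi_X$ lands there), and taking $S=\pi_X(Q)$ yields \eqref{L2H_corresp2}. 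For the boundary case $|z|=r$ one uses that $I(\tau)$ is an interval with $\sup\ge r$; if $r\in I(\tau)$ strictly then $\sum_{n,N}|\H_{Q_n}(N)r^N|<\infty$ gives absolute convergence on the closed disk directly.

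The main obstacle I expect is not any single point but the bookkeeping in point~1: identifying the coefficients of $(-(-x_1)^+)^{\shuffle N}$ in terms of signed Stirling numbers and pushing the resulting bound through to reconstruct $\limsup|a_N|^{1/N}=B \Rightarrow R=(B+1)^{-1}$ without losing a factor. Concretely, the delicate inequality is that $\sum_n|\langle(-(-x_1)^+)^{\shuffle N}\mid x_1^n\rangle|\,\Li_{x_1}(|z|)^n/n!$ is dominated by $(e^{\Li_{x_1}(|z|)}-1)^N$ up to sign, so that $\sum_N|a_N|(e^{\Li_{x_1}(|z|)}-1)^N<\infty$ exactly when $e^{\Li_{x_1}(|z|)}-1=\dfrac{|z|}{1-|z|}<B^{-1}$, i.e. $|z|<(B+1)^{-1}$; this is the same computation as in Proposition~\ref{lem2} but now one must also control the alternating signs, which is where a clean proof will live or die.
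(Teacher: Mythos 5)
Your proposal is correct and follows essentially the same route as the paper: point~1 rests on Lemma~\ref{surj_comb} and the $(e^{\Li_{x_1}(|z|)}-1)^N=\bigl(\tfrac{|z|}{1-|z|}\bigr)^N$ bound (the signs in $-(-x_1)^+$ indeed only affect coefficients by factors of modulus one), while points~2 and~3 are, as in the paper, consequences of Lemma~\ref{TaylSeq} together with the Cauchy-type estimates and the polynomial identity $\frac{\Li_w(z)}{1-z}=\sum_N\H_{\pi_Y(w)}(N)z^N$. Your explicit verification of $\Li_S=T$ via the shuffle-morphism property and your appeal to Proposition~\ref{pre_dom_loc} for identifying $\sum_n\H_{\pi_Y([S]_n)}(N)=\H_{\pi_Y(S)}(N)$ are only cosmetic reorganizations of the paper's argument, which rederives the latter by exchanging sum and Cauchy integral.
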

\begin{proof}
\begin{enumerate}
\item The fact that the series \eqref{preim_series1} is summable comes from the fact that\\ 
$\omega(a_N(-(-x_1)^+)^{\shuffle N})\ge N$. Now from the Lemma \ref{surj_comb}, we get 
$$
(S)_n=\sum_{N\ge0}(a_N(-(-x_1)^+)^{\shuffle N})_n=(-1)^{N+n}a_NN!S_2(n,N)x_1^n.
$$
Then, with $r=\sup_{z\in K}\abs{z}$ (we have indeed $r=||Id||_K$) and
taking into account that $\absv{Li_{x_1}}_K\le\log({1}/(1-r))$, we have
\begin{eqnarray*}
\sum_{n\ge0}\absv{\Li_{(S)_n}}_K
&\le&\sum_{n\ge0}\sum_{N\ge0}\abs{a_N}N!S_2(n,N)\absv{\Li_{x_1^n}}_K
\le \sum_{n\ge0}\sum_{N\ge0}\abs{a_N}N!S_2(n,N)\frac{\absv{\Li_{x_1}}_K^n}{n!} \cr 
&\le&\sum_{N\ge0}\abs{a_N}\sum_{n\ge0}N!S_2(n,N)\frac{\abs{\Li_{x_1}}_K^n}{n!} 
\le\sum_{N\ge0}\abs{a_N}(e^{\log(\frac{1}{1-r})}-1)^N \cr
&=&\sum_{N\ge0}\abs{a_N}\biggl(\frac{r}{1-r}\biggr)^N.
\end{eqnarray*}
Now if we suppose that $r\le(B+1)^{-1}$, we have $r(1-r)^{-1}\le \dfrac{1}{B}$
and this shows that the last sum is finite.
\item This point and next point are consequences of Lemma \ref{TaylSeq}. Now, considering the homogeneous decomposition
$
S=\sum\limits_{n\ge0}[S]_n\in\Dom_R(\Li).
$
We first establish inequation \eqref{abs_sum0}. Let $0<r<r_1<R$ and consider the path 
$\gamma(t)=r_1e^{2i\pi t}$, we have
\begin{eqnarray*}
\abs{\H_{\pi_Y([S]_n)}(N)}=\abs{\frac{1}{2i\pi}\int_\gamma\frac{\Li_{[S]_n}(z)}{(1-z)z^{N+1}}dz}\le 
\frac{2\pi}{2\pi}\frac{\absv{\Li_{[S]_n}}_K}{(1-r_1)r_1^{N+1}},
\end{eqnarray*}
$K=\gamma([0,1])$ being the circle of center $0$ and radius $r_1$.
Taking into account that, for $K\subset_{compact}\D_{<R}$, we have a decomposition
$
\sum_{n\in\N}\abs{\Li_{[S]_n}}_K=M<+\infty,
$
we get 
\begin{eqnarray*}
\sum_{n,N\ge0}\abs{\H_{\pi_Y([S]_n)}(N)r^N}
&=&\sum_{n,N\ge0}\abs{\H_{\pi_Y([S]_n)}(N)r_1^N}(\frac{r}{r_1})^N 
=\sum_{N\ge0}(\frac{r}{r_1})^N\sum_{n\ge0}\abs{\H_{\pi_Y([S]_n)}(N)r_1^N}\cr
&\le& \sum_{N\ge0}(\frac{r}{r_1})^N\frac{M}{(1-r_1)r_1}\le \frac{M}{(1-r_1)(r_1-r)}<+\infty.
\end{eqnarray*}  
The series 
$\sum\limits_{n\ge0}\Li_{[S]_n}(z)$ converges to $\Li_{S}(z)$ in $\calH(D_{<R})$
($D_{<R}$ is the open disk defined by $|z|<R$). For any $N\ge0$, by Cauchy's formula, one has, 
\begin{eqnarray*}
\H_{\pi_Y(S)}(N)
&=&\frac{1}{2i\pi}\int_\gamma\frac{\Li_{S}(z)}{(1-z)z^{N+1}}dz=\frac{1}{2i\pi}\int_\gamma\frac{\sum_{n\ge0}\Li_{[S]_n}(z)}{(1-z)z^{N+1}}dz \cr &=&\frac{1}{2i\pi}\sum_{n\ge0}\int_\gamma\frac{\Li_{[S]_n}(z)}{(1-z)z^{N+1}}dz=\sum_{n\ge0}\H_{\pi_Y([S]_n)}(N)
\end{eqnarray*}
the exchange of sum and integral being due to the compact convergence.
The absolute convergence comes from the fact that the convergence of
$\sum\limits_{n\ge}\Li_{[S]_n}(z)$ is unconditional \cite{Sch}.
\item Fixing $N\in\N$, from inequation \eqref{abs_sum2}, we get
$\sum\limits_{n\ge0}\abs{\H_{Q_n}(N)}<+\infty$ which proves the absolute convergence.
Remark now that $(\pi_X(Q))_n=\pi_X(Q_n)$ and $\pi_Y(\pi_X(Q_n))=Q_n$,
one has, for all $\abs{z}\le r$, 
$
\abs{\Li_{\pi_X(Q_n)}(z)}=\abs{\sum\limits_{N\in\N}\H_{Q_n}(N)z^N}\le\abs{\sum\limits_{N\in \N}\H_{Q_n}(N)r^N},
$
in other words
$
\absv{\Li_{\pi_X(Q_n)}}_{D\leq r}\le\abs{\sum\limits_{N\in \N}\H_{Q_n}(N)r^N}
$
and 
\begin{eqnarray*}
\sum_{n\in \N}\absv{\Li_{\pi_X(Q_n)}}_{D\le r}\le\abs{\sum_{n,N\in\N}\H_{Q_n}(N)r^N}<+\infty
\end{eqnarray*}
which shows that $\pi_X(Q)\in\Dom_r(\Li)$. The equation (\ref{L2H_corresp2}) is a consequence of\\ point 2, taking $S=\pi_X(Q)$.
\end{enumerate}
\end{proof}
Now, we have have a better understanding of what can (and will) be the domain, $\Dom(\H_\bullet)$, of harmonic sums.  
\begin{definition}\label{def2}
We set 
$
\Dom^{\mathrm{loc}}(\Li)=\bigcup\limits_{0<R\le1}\Dom_R(\Li) ; 
\Dom(\H_\bullet)=\pi_Y(\Dom^{\mathrm{loc}}(\Li))
$
and, for $S\in\Dom^{\mathrm{loc}}(\Li)$, $\Li_S(z)=\sum\limits_{n\ge0}\Li_{[S]_n}(z)$ and $\dfrac{\Li_S(z)}{1-z}=\sum\limits_{N\ge0}\H_{\pi_Y(S)}(N)z^N$.
\end{definition}
\section{Applications}
\quad  We remark that formula \eqref{polylogarithm}, {\it i.e.}, $
\Li_{s_1,\ldots,s_r}(z):=\sum\limits_{n_1>\ldots>n_r>0}\dfrac{z^{n_1}}{n_1^{s_1}\ldots n_r^{s_r}}$,
still makes sense for $|z|<1$ and $(s_1,\ldots,s_r)\in \C^r$ so that we will freely use the list indexing to get index lists with $s_i\in \Z$ for any $i =1, \ldots , r$ and $r \in \N^{+}$.

Recall that for any $s_1, \ldots , s_r  \in \N$, we can present $\Li_{-s_1, \ldots , -s_r} (z)$ as a polynomial of $\dfrac{1}{1-z}$ with integer coefficients. Then, using \eqref{Li_shuffle_mor} and 
$(kx_1)^*=[(x_1)^*]^{\shuffle\,k}$, we get 
$
\dfrac{1}{(1-z)^k} =\Li_{(kx_1)^*}(z), \  \forall k \in \N^+,
$ 
we obtain a polynomial 
$P \in \Dom(\Li) \cap \C[x_1^*]=\C[x_1^*]$ such that $\Li_{-s_1, \ldots , -s_r} = \Li_P$ (see \cite{GHM1}).  Using Theorem \ref{PropDomR}, we have $ \dfrac{\Li_P(z)}{1-z} = \sum\limits_{N\geq 0} \H_{\pi_Y(P)} (N) z^N $. This means that we can provide a class of elements of 
$\Dom(\H_\bullet)$ (as in Definition \ref{def2}) relative to the set of indices of harmonic sums at negative integer multiindices. Here are some examples.
\begin{example}\label{ex3}
For any $|z| < 1$, we have
\begin{eqnarray*}
&&\Li_{x_1^*}(z) = \dfrac{1}{1-z}\ ;\ 
\Li_{x_1^* -1_{X^*}} (z) = \dfrac{z}{1-z} = \Li_{0} (z)\ 
;\ \Li_{(2x_1)^* -x_1^* } (z) = \dfrac{z}{(1-z)^2}  = \Li_{-1}(z);\\
&&\Li_{(2x_1)^* -2 x_1^* +1_{X^*}} (z) = \dfrac{z^2}{(1-z)^2} = \Li_{0,0}(z);\\
&&\Li_{12(5x_1)^* - 33(4x_1)^* +31(3x_1)^* - 11(2x_1)^* +x_1^*} (z) = \dfrac{z^4 +7z^3 +4z^2}{(1-z)^5} = \Li_{-2,-1}(z);\\
&&\Li_{40(6x_1)^* - 132(5x_1)^* + 161(4x_1)^* -87(3x_1)^* +19(2x_1)^* -x_1^*} (z) = \dfrac{z^5 + 14z^4 + 21z^3 + 4z^2}{(1-z)^6} = \Li_{-2,-2}(z);
\end{eqnarray*}
\begin{eqnarray*}
&&\Li_{1260(8x_1)^* - 5400(7x_1)^* + 9270(6x_1)^* -8070 (5x_1)^* + 3699 (4x_1)^* - 829 (3x_1)^* + 71 (2x_1)^* - x_1^*  } (z)\\
&& = \dfrac{ z^7 + 64z^6 + 424z^5 + 584z^4 + 179z^3 + 8z^2}{(1-z)^8} = \Li_{-3,-3}(z);\\
&&\Li_{10(6x_1)^* - 38(5x_1)^* +55(4x_1)^*- 37(3x_1)^* +11(2x_1)^*-x_1^*}(z)=  \dfrac{z^5 + 6z^4 + 3z^3}{(1-z)^6} = \Li_{-1,0,-2}(z);\\
&& \Li_{280(8x_1)^* -1312(7x_1)^* +2497(6x_1)^* - 2457(5x_1)^* +1310(4x_1)^*-358(3x_1)^*+41(2x_1)^*-x_1^*}(z)\\
&& = \dfrac{z^7 + 34z^6 + 133z^5 + 100z^4 + 12z^3}{(1-z)^8} = \Li_{-1,-2,-2}(z).
\end{eqnarray*}
Thus, for any $N \in \N$, for readability, below $1$ stands for $1_{X^*}$
\begin{eqnarray*}
&&\H_{\pi_Y(x^*_1)} (N) = N +1 ,\ \H_{\pi_{Y} (x_1^* - 1)}(N) = N = \sum\limits_{n=1}^N n^0 ,\ \H_{\pi_Y ((2x_1)^* -x_1^*)}  (N) = \dfrac{1}{2}N^2 + \dfrac{1}{2}N = \sum\limits_{n=1}^N n^1; \\
&&\H_{\pi_Y ((2x_1)^* -2 x_1^* +1)}  (N) = \dfrac{1}{2}N^2 - \dfrac{1}{2}N = \sum\limits_{n_1=1}^N n_1^0 \sum_{n_2=1}^{n_1-1} n_2^0 ;\\
&&\H_{\pi_Y( 12(5x_1)^* - 33(4x_1)^* +31(3x_1)^* - 11(2x_1)^* +x_1^*)} (N) = \dfrac{1}{10}N^5+\dfrac{ 1}{8}N^4  - \dfrac{ 1}{12}N^3 - \dfrac{ 1}{60}N - \dfrac{1}{8}N^2=  \sum\limits_{n_1=1}^N n_1^2 \sum_{n_2=1}^{n_1-1} n_2^1; \\
&&\H_{\pi_Y(40(6x_1)^* - 132(5x_1)^* + 161(4x_1)^* -87(3x_1)^* +19(2x_1)^* -x_1^*)} (N) = \dfrac{1}{15}N^5 + \dfrac{1}{18}N^6 - \dfrac{ 5}{72}N^4 + \dfrac{1}{72}N^2 \\
&&+ \dfrac{ 1}{60}N - \dfrac{1}{12}N^3 =   \sum\limits_{n_1=1}^N n_1^2 \sum_{n_2=1}^{n_1-1} n_2^2; \\
&&\H_{\pi_Y(1260(8x_1)^* - 5400(7x_1)^* + 9270(6x_1)^* -8070 (5x_1)^* + 3699 (4x_1)^* - 829 (3x_1)^* + 71 (2x_1)^* - x_1^*)  } (N) = \sum\limits_{n_1=1}^N n_1^3 \sum_{n_2=1}^{n_1-1} n_2^3;\\
&&\H_{\pi_Y(10(6x_1)^* - 38(5x_1)^* +55(4x_1)^*- 37(3x_1)^* +11(2x_1)^*-x_1^*)}(N) = -\dfrac{ 1}{40}N^5 +\dfrac{ 1}{72}N^6 - \dfrac{ 1}{36}N^4 + \dfrac{1}{72}N^2\\
&& + \dfrac{1}{24}N^3 - \dfrac{1}{60}N = \sum\limits_{n_1=1}^N n_1^1 \sum_{n_2=1}^{n_1-1} \sum_{n_3=1}^{n_2-1} n_3^2; \\
&&\H_{\pi_Y(280(8x_1)^* -1312(7x_1)^* +2497(6x_1)^* - 2457(5x_1)^* +1310(4x_1)^*-358(3x_1)^*+41(2x_1)^*-x_1^*)}(N) = - \dfrac{ 13}{1260}N^7 \\
&&+ \dfrac{ 1}{144}N^8 - \dfrac{ 7}{240}N^6 + \dfrac{ 1}{24}N^4 - \dfrac{ 7}{360}N^2 + \dfrac{ 23}{720}N^5 + \dfrac{ 1}{210}N - \dfrac{ 19}{720}N^3  = \sum\limits_{n_1=1}^N n_1^1 \sum_{n_2=1}^{n_1-1} n_2^2 \sum_{n_3=1}^{n_2-1} n_3^2 .
\end{eqnarray*}
\end{example}
Observe that, from this Definition, Theorem \ref{theomain} will show us that $\Dom(\H_\bullet)$ is a stuffle subalgebra of $\ncs{\C}{Y}$.  Let us however remark that some series are not in this domain as shown below
\begin{enumerate}
\item[(i)] The series $T=\sum\limits_{n=1}^{\infty}(-1)^{n-1} \dfrac{y_n}{n} \in \ncs{\C}{Y}$ is not in 
$\Dom(\H_\bullet)$ because we see that its decomposition by weights ($T=\sum\limits_{n=1}^{\infty} T_n$ 
as in \eqref{abs_sum2}) provides $T_n= \dfrac{(-1)^{n-1}}{n} y_n$ \ul{for} $n\ge1$ and $T_0=0$. Direct calculation, gives, for $n\ge1$  
$
\H_{y_n}(N) = \sum\limits_{k=1}^N \dfrac{1}{k^n} 
$,
so that we have
$\H_{y_n} (N) \geq 1, \forall n \in \N^+; N \in \N^+$, because $\H_{y_n}(0)=0$,  for all $0<r<1$, one has 
\begin{equation}
\sum\limits_{n,N}\abs{\H_{T_n}(N)r^N} = \sum\limits_{N \geq 0} 
\sum\limits_{n \geq 1}\abs{\dfrac{1}{n} \H_{y_n}(N)r^N} \ge 
\left( \sum\limits_{n\ge0} \frac{1}{n} \right) \frac{r}{1-r}  =  +\infty . 
\end{equation}

%\tcp{Difficult to read: What is $T_n$ ? is it of the same type as $T$ ? }\\
However one can get unconditional convergence using a sommation by pairs (odd + even).
\item[(ii)] For all $s\in]1,+\infty[$, the series $T(s)=\sum\limits_{n=1}^{\infty}(-1)^{n-1}y_nn^{-s}\in\ncs{\C}{Y}$ is in $\Dom(\H_\bullet)$.
\end{enumerate}

We can now state the 
\begin{theorem}\label{theomain}
Let $S,T\in\Dom^{\mathrm{loc}}(\Li)$, then
$
S\shuffle T\in\Dom^{\mathrm{loc}}(\Li),\pi_X(\pi_Y(S)\stuffle\pi_Y(T))\in\Dom^{\mathrm{loc}}(\Li)
$
and for all $N\ge0$,
\begin{eqnarray}
\Li_{S\shuffle T}&=&\Li_{S}\Li_{T};\quad\Li_{1_{X^*}}=1_{\calH(\Omega)},\label{eq1}\\
\H_{\pi_Y(S)\stuffle \pi_Y(T)}(N)&=& \H_{\pi_Y(S)}(N)\H_{\pi_Y(T)}(N).\label{eq2}\\
\dfrac{\Li_{S}(z)}{1-z}\odot\dfrac{\Li_{T}(z)}{1-z}&=&\dfrac{\Li_{\pi_X(\pi_Y(S)\stuffle\pi_Y(T))}(z)}{1-z}.\label{eq3}
\end{eqnarray}
\end{theorem}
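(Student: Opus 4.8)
I would settle the three identities in the order \eqref{eq1}, \eqref{eq2}, \eqref{eq3}. Identity \eqref{eq1} should cost nothing new: since $S,T\in\Dom^{\mathrm{loc}}(\Li)=\bigcup_{0<R\le1}\Dom_R(\Li)$, pick $R_S,R_T$ with $S\in\Dom_{R_S}(\Li)$, $T\in\Dom_{R_T}(\Li)$ and set $R=\min(R_S,R_T)$; the monotonicity in Proposition~\ref{lem2}.1 puts both series in $\Dom_R(\Li)$, and Proposition~\ref{lem2}.2 then delivers at once $S\shuffle T\in\Dom_R(\Li)\subset\Dom^{\mathrm{loc}}(\Li)$, the shuffle morphism identity $\Li_{S\shuffle T}=\Li_S\Li_T$ on $D_{<R}$, and $\Li_{1_{X^*}}=1$. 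For \eqref{eq2} the plan is to first reduce the stuffle side to polynomials by a grading argument: each homogeneous component $[S]_p$, $[T]_q$ is a \emph{polynomial} (the alphabet $X$ is finite), $\pi_Y$ respects the weight grading, and $\stuffle$ preserves weight, so $Q:=\pi_Y(S)\stuffle\pi_Y(T)$ is a bona fide series of $\ncs{\C}{Y}$ whose weight-$n$ component is the \emph{finite} sum $Q_n=\sum_{p+q=n}\pi_Y([S]_p)\stuffle\pi_Y([T]_q)$; since $\H_\bullet$ is a $\stuffle$-morphism on $\ncp{\C}{Y}$ one then gets, for every $N$, $\H_{Q_n}(N)=\sum_{p+q=n}\H_{\pi_Y([S]_p)}(N)\,\H_{\pi_Y([T]_q)}(N)$.

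\noindent The heart of the matter — and the step I expect to be the real obstacle — is to prove $\pi_X(Q)\in\Dom^{\mathrm{loc}}(\Li)$, which I would obtain from part 3 of Theorem~\ref{PropDomR}: it suffices to produce some $r\in]0,1]$ with $\sum_{n,N\ge0}|\H_{Q_n}(N)|\,r^N<+\infty$. Fix $\rho<R_S$ and $\sigma<R_T$. By \eqref{abs_sum0} of Theorem~\ref{PropDomR}.2 the row sums $A(N):=\sum_{p\ge0}|\H_{\pi_Y([S]_p)}(N)|$ and $B(N):=\sum_{q\ge0}|\H_{\pi_Y([T]_q)}(N)|$ are finite and satisfy $\sum_N A(N)\rho^N<\infty$, $\sum_N B(N)\sigma^N<\infty$, hence $A(N)\le C_S\rho^{-N}$ and $B(N)\le C_T\sigma^{-N}$; consequently, for any $r$ with $0<r<\rho\sigma$ and $r\le1$,
\[
\sum_{n,N\ge0}|\H_{Q_n}(N)|\,r^N\ \le\ \sum_{N\ge0}r^N A(N)B(N)\ \le\ C_SC_T\sum_{N\ge0}\Big(\tfrac{r}{\rho\sigma}\Big)^N\ <\ +\infty .
\]
Theorem~\ref{PropDomR}.3 then yields $\pi_X(Q)\in\Dom_r(\Li)\subset\Dom^{\mathrm{loc}}(\Li)$ together with $\H_{\pi_Y(\pi_X(Q))}(N)=\sum_n\H_{Q_n}(N)=:\ell(N)$, absolutely convergent. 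This is precisely the place where one must pay a factor in the radius — one recovers only some $r<R_SR_T$, not $\min(R_S,R_T)$ — and that loss is exactly what the union over $R$ in $\Dom^{\mathrm{loc}}$ is designed to absorb; keeping the interchanges of the various (double) sums honest is the only genuinely technical point.

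\noindent It then remains to assemble the pieces. For each fixed $N$ the three series $\sum_p\H_{\pi_Y([S]_p)}(N)$, $\sum_q\H_{\pi_Y([T]_q)}(N)$ and $\sum_n\H_{Q_n}(N)$ converge absolutely (Theorem~\ref{PropDomR}.2 and the bound just proved), so a Cauchy-product rearrangement is legitimate and gives $\ell(N)=\sum_n\H_{Q_n}(N)=\big(\sum_p\H_{\pi_Y([S]_p)}(N)\big)\big(\sum_q\H_{\pi_Y([T]_q)}(N)\big)=\H_{\pi_Y(S)}(N)\,\H_{\pi_Y(T)}(N)$, which is \eqref{eq2} (and, in passing, shows $\Dom(\H_\bullet)$ is closed under $\stuffle$). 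Finally, for \eqref{eq3}: by \eqref{absconv1} the coefficient of $z^N$ in the Hadamard product $\Li_S(z)/(1-z)\odot\Li_T(z)/(1-z)$ is $\H_{\pi_Y(S)}(N)\H_{\pi_Y(T)}(N)=\ell(N)$, whereas by \eqref{L2H_corresp2} of Theorem~\ref{PropDomR}.3, applied with $\pi_X(Q)$ in the role of the series there, $\ell(N)$ is the coefficient of $z^N$ in $\Li_{\pi_X(Q)}(z)/(1-z)$; two power series with equal coefficients define the same analytic function on the common disk, which is \eqref{eq3}.
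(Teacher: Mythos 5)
Your proposal is correct and follows essentially the same route as the paper: \eqref{eq1} from Proposition~\ref{lem2}, and \eqref{eq2}--\eqref{eq3} by reducing to the weight-homogeneous components, establishing the summability condition \eqref{abs_sum2} for $Q=\pi_Y(S)\stuffle\pi_Y(T)$ with some $r<R_SR_T$, and then invoking Theorem~\ref{PropDomR} (points 2 and 3) together with a Cauchy-product rearrangement of the absolutely convergent series $\sum_p\H_{\pi_Y([S]_p)}(N)$ and $\sum_q\H_{\pi_Y([T]_q)}(N)$. The only cosmetic difference is that you obtain the key bound via the geometric estimates $A(N)\le C_S\rho^{-N}$, $B(N)\le C_T\sigma^{-N}$, whereas the paper appeals to the Hadamard product of the absolute-value series; in both arguments the radius drops to some $r<R_SR_T$, a loss that, as you observe, is exactly absorbed by the union defining $\Dom^{\mathrm{loc}}(\Li)$.
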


\begin{proof}
For equation (\ref{eq1}), we get, from Lemma \ref{lem2} that $Dom^{loc}(\Li)$ is the union of an increasing
set of shuffle subalgebras of $\ncs{\C}{X}$. It is therefore a shuffle subalgebra of the latter. 

For equation \eqref{eq2}, suppose $S\in\Dom_{R_1}(\Li)$ (resp. $T\in\Dom_{R_2}(\Li)$).
By \cite{Had} and Theorem \ref{PropDomR}, one has 
$
\dfrac{\Li_{S}(z)}{1-z}\odot\dfrac{\Li_{T}(z)}{1-z}\in\Dom_{R_1R_2}(\Li)
$
where $\odot$ stands for the Hadamard product \cite{Had}. Hence, for $|z|<R_1R_2$, one has
\begin{eqnarray}
f(z)=\dfrac{\Li_{S}(z)}{1-z}\odot \dfrac{\Li_{T}(z)}{1-z}=\sum_{N\ge0}\H_{\pi_Y(S)}(N)\H_{\pi_Y(T)}(N)z^N
\end{eqnarray}
and, due to Theorem \ref{PropDomR} point \eqref{absconv1}, for all $N$,
$\sum\limits_{p\ge0}\H_{\pi_Y(S_p)}(N)=\H_{\pi_Y(S)}(N)$ and
$\sum\limits_{q\ge0}\H_{\pi_Y(T_q)}(N)=\H_{\pi_Y(T)}(N)$ 
(absolute convergence)
then, as the product of two absolutely convergent series is absolutely convergent
(w.r.t. the Cauchy product), one has, for all $N$,
\begin{eqnarray}
\H_{\pi_Y(S)}(N)\H_{\pi_Y(T)}(N)
&=&\biggl(\sum_{p\ge0}\H_{\pi_Y(S_p)}(N)\biggr)\biggl(\sum_{q\ge0}\H_{\pi_Y(T_q)}(N)\biggr) \cr
&=&\sum_{p,q\ge0}\H_{\pi_Y(S_p)}(N)\H_{\pi_Y(T_q)}(N)=\sum_{n\ge0}\sum_{p+q=n}\H_{\pi_Y(S_p)\stuffle\pi_Y(T_q)}(N)\cr
&=&\sum_{n\ge0}\H_{(\pi_Y(S)\stuffle\pi_Y(T))_n}(N). 
\end{eqnarray}

Remains to prove that condition of Theorem \ref{PropDomR}, \textit{i.e.} inequation \eqref{abs_sum2}
is fulfilled. To this end, we use the well-known fact that if $\sum_{m\ge 0}c_m\,z^m$ has radius of
convergence $R>0$, then $\sum\limits_{m\ge 0}\abs{c_m}z^m$ has the same radius of convergence
(use $1/R=\limsup_{m\ge 1}\abs{c_m}^{-m}$), then from the fact that $S\in\Dom_{R_1}(\Li)$ 
(resp. $T\in Dom_{R_2}(\Li)$), we have \eqref{abs_sum0} for each of them and,
using the Hadamard product of these expressions, we get\\
\centerline{$\forall r\in]0,R_1.R_2[,\sum_{p,q,N\ge0}|\H_{\pi_Y(S_p)}(N)\H_{\pi_Y(T_q)}(N)\,r^N|<+\infty,$}
and this assures, for $|z|<R_1R_2$, the convergence of 
\begin{eqnarray}
f(z)=\sum_{n,N\ge0}\H_{(\pi_Y(S)\stuffle\pi_Y(T))_n}(N)z^N
\end{eqnarray}
applying Theorem \ref{PropDomR} point \eqref{QinDomH} to $Q=\pi_Y(S)\stuffle\pi_Y(T)$
(with any $r<R_1R_2$), we get $\pi_X(Q)=\pi_X(\pi_Y(S)\stuffle\pi_Y(T))\in\Dom^{\mathrm{loc}}(\Li)$ and 
\begin{eqnarray*}
f(z)=\sum_{N\ge0}\Big(\sum_{n\ge0}\H_{(\pi_Y(S)\stuffle\pi_Y(T))_n}(N)\Big)z^N
=\dfrac{\Li_{\pi_X(\pi_Y(S)\stuffle\pi_Y(T))}(z)}{1-z}.
\end{eqnarray*}
hence  we obtain  \eqref{eq2}.
\end{proof}

Recall that, as in Example \ref{ex3},  for any $s_1, \ldots , s_r \in \N$, we can find an elements $P \in \Dom(\Li)$ such that $\dfrac{\Li_P(z)}{1 - z} = \sum\limits_{N \geq 0} \H_{\pi_Y(P) (N)} z^N.$ Theorem  \ref{theomain} proves that $\H$ is a stuffle character on $\Dom (\H)$. Then for any mixed multiindices $\mathrm{s}$, we can find the elements $P \in \Dom(\Li)$ satisfying $\dfrac{\Li_{\mathrm{s}}(z)}{1-z} = \sum\limits_{N\geq 0} \H_{\pi_Y(P)}(N) z^N$. 
\begin{example}
\begin{eqnarray*}
&&\H_{\pi_Y( \frac{1}{2} (2x_1)^* - x_1^* + \frac{1}{2}   )} (N) = \dfrac{1}{4}N^2 - \dfrac{1}{4}N = \sum\limits_{n_1=1}^N \dfrac{1}{n_1} \sum_{n_2=1}^{n_1-1} n_2^1, \\ 
\mbox{hence} && \H_{\pi_{Y}(x_1) \stuffle \pi_Y ((2x_1)^* -x_1^*)  -\frac{1}{2} \pi_Y(  (2x_1)^*  - 1  )} (N) =  \sum\limits_{n_1=1}^N n_1 \sum_{n_2=1}^{n_1-1} \dfrac{1}{ n_2}, \\
&& \H_{\pi_Y(\frac{2}{3} (3x_1)^*  -\frac{3}{2} (2x_1)^* +x_1^* -\frac{1}{6} )} (N) = \dfrac{1}{9}N^3 -\dfrac{1}{12}N^2 -\dfrac{1}{36}N =  \sum\limits_{n_1=1}^N \dfrac{1}{n_1} \sum_{n_2=1}^{n_1-1} n_2^2,\\
\mbox{then} &&  \H_{\pi_Y(x_1 ) \stuffle \pi_Y (2(3x_1)^* -3(2x_1 )^* +x_1^*)  - \pi_Y (\frac{2}{3}(3x_1)^* -\frac{1}{2} (2x_1)^* -\frac{1}{6}) } (N) = \sum\limits_{n_1=1}^N n_1^2 \sum_{n_2=1}^{n_1-1} \dfrac{1}{ n_2}, \\
&& \H_{\pi_{Y} (\frac{1}{3}(2x_1)^* - \frac{5}{6}x_1^* +\frac{1}{2} + \frac{1}{6} x_1)} (N) = \sum\limits_{n_1=1}^N \dfrac{1}{n_1^2} \sum_{n_2=1}^{n_1-1} n_2^2, \\
\mbox{which entails} && \H_{\pi_Y(x_0x_1) \stuffle \pi_Y ( 2(3x_1)^* -3(2x_1 )^* +x_1^*) -\pi_Y(\frac{1}{3}(2x_1)^* +\frac{1}{6}x_1^* -\frac{1}{2} + \frac{1}{6}x_1 )} = \sum\limits_{n_1=1}^N n_1^2 \sum_{n_2=1}^{n_1-1} \dfrac{1}{ n_2^2}, \\
&& \H_{\pi_Y(\frac{20}{3} (6x_1)^* -\frac{128}{5}(5x_1)^* +\frac{153}{4} (4x_1)^* -\frac{82}{3}(3x_1)^* +\frac{653}{72}(2x_1)^*  -\frac{373}{360} x_1^* - \frac{1}{60} )} (N) = \sum\limits_{n_1=1}^N \dfrac{1}{n_1} \sum_{n_2=1}^{n_1-1}  n_2^2\sum_{n_3=1}^{n_2-1}  n_3^2\\
&&\H_{\pi_Y(\frac{40}{3} (6x_1)^* -50(5x_1)^* +\frac{427}{6} (4x_1)^* -\frac{281}{6}(3x_1)^* + \frac{27}{2}(2x_1)^* -\frac{7}{6}x_1^* )} (N) =  \sum\limits_{n_1=1}^N n_1^2 \sum_{n_2=1}^{n_1-1} \dfrac{1}{n_2} \sum_{n_3=1}^{n_2-1}  n_3^2\\
\mbox{thus} &&\H_{\pi_Y(x_1) \stuffle \pi_Y(40(6x_1)^* - 132(5x_1)^* + 161(4x_1)^* -87(3x_1)^* +19(2x_1)^* -x_1^*)}(N) - \sum\limits_{n_1=1}^N n_1^2 \sum_{n_2=1}^{n_1-1} \dfrac{1}{n_2} \sum_{n_3=1}^{n_2-1}  n_3^2\\
&& - \sum\limits_{n_1=1}^N \dfrac{1}{n_1} \sum_{n_2=1}^{n_1-1}  n_2^2\sum_{n_3=1}^{n_2-1}  n_3^2 - \sum\limits_{n_1=1}^N n_1^2 \sum_{n_2=1}^{n_1-1}  n_2 - \sum\limits_{n_1=1}^N n_1 \sum_{n_2=1}^{n_1-1}  n_2^2 = \sum\limits_{n_1=1}^N n_1^2 \sum_{n_2=1}^{n_1-1}  n_2^2 \sum_{n_3=1}^{n_2-1} \dfrac{1}{n_3}. 
\end{eqnarray*}
\end{example}

\section{Some remarks about stuffle product and stuffle characters and their symbolic computations.}\label{stuffledef}

For the some reader's convenience, we recall here the Definitions of shuffle and stuffle products.
As regards shuffle, the alphabet $\calX$ is arbitrary and $\shuffle$ is defined by the following recursion 
(for $a,b\in\calX$ and $u,v\in\calX^*$)  
\begin{eqnarray}
u\shuffle 1_{\calX^*}=1_{\calX^*}\shuffle u=u; \ 
au\shuffle bv = a(u\shuffle bv)+b(au\shuffle v).
\end{eqnarray}
As regards stuffle, the alphabet is $Y=Y_{\N_{+}}=\{y_s\}_{s\in\N_{+}}$ and $\stuffle$ is defined by the following recursion
\begin{eqnarray}
u\stuffle 1_{Y^*}&=&1_{Y^*}\stuffle u=u,\\
y_su\stuffle y_tv&=&y_s(u\stuffle y_tv)+y_t(y_su\stuffle v) + y_{s+t}(u\stuffle v).
\end{eqnarray}
Be it for stuffle or shuffle, the noncommutative\footnote{For concatenation.} polynomials equipped with this product form an
associative commutative and unital algebra namely $(\ncp{\C}{X},\shuffle,\allowbreak1_{X^*})$ (resp. $(\ncp{\C}{Y},\stuffle,1_{Y^*})$). 

\begin{example}
As examples of characters, we have already seen
\begin{itemize}
\item $\Li_{\bullet}$ from $(\Dom^{\mathrm{loc}}(\Li_{\bullet}),\shuffle,1_{X^*})$ to $\calH(\Omega)$
\item $\H_{\bullet}$ from $(\Dom(\H_{\bullet}),\stuffle,1_{Y^*})$ to $\C^{\N}$ 
(arithmetic functions $\N\rightarrow\C$)
\end{itemize}
\end{example}
In general, a character from a $k$-algebra\footnote{Here we will use $k=\Q$ or $\C$.} $(\calA,*_1,1_{\calA})$ 
with values in $(\calB,*_2,1_{\calB})$ is none other than a morphism between the $k$-algebras $\calA$ and a
commutative algebra\footnote{In this context all algebras are associative and unital.} $\calB$.
The algebra $(\calA,*_1,1_{\calA})$ does not have to be commutative for example characters of 
$(\ncp{\C}{\calX},conc,1_{\calX^*})$ - \textit{i.e.} $\tt conc$-characters - where all proved to be of the form 
\begin{eqnarray}\label{Plane}
\biggl(\sum_{x\in\calX}\alpha_xx\biggr)^*
\end{eqnarray}
\textit{i.e.} Kleene stars of the plane \cite{CHM1,GHM1}.
They are closed under shuffle and stuffle and endowed with these laws,
they form a group. Expressions like the infinite sum within brackets in \eqref{Plane} (\textit{i.e.}
homogeneous series of degree 1) form a vector space noted $\widehat{\C.Y}$.

As a consequence, given $P=\sum\limits_{i\ge1}\alpha_iy_i$ and $Q=\sum\limits_{j\ge1}\beta_jy_j$,
we know in advance that their stuffle is a $\tt conc$-character \textit{i.e.} of the form
$(\sum\limits_{n\ge1}c_ny_n)^*$. Examining the effect of this stuffle on each letter (which suffices),
we get the identity
\begin{eqnarray}\label{WI}
\biggl(\sum_{i\ge1}\alpha_iy_i\biggr)^*\stuffle\biggl(\sum_{j\ge1}\beta_jy_j\biggr)^*=
\biggl(\sum_{i\ge1}\alpha_iy_i+\sum_{j\ge1}\beta_jy_j+\sum_{i,j\ge1}\alpha_i\beta_jy_{i+j}\biggr)^*
\end{eqnarray}  
This suggests to take an auxiliary variable, say $q$, and code ``the plane'' $\widehat{\C.Y}$,
\textit{i.e.} expressions like \eqref{Plane}, in the style of Umbral calculus by
$
\pi_Y^{\mathrm{Umbra}}:\sum_{n\ge1}\alpha_n\,q^n\longmapsto\sum_{n\ge1}\alpha_ny_n   $
which is linear and bijective\footnote{Its inverse will be naturally noted $\pi_q^{\mathrm{Umbra}}$.} from $\C_+[[q]]$ to $\widehat{\C.Y}$.
%\tcp{Very good transcription.}\\
With this coding at hand and for $S,T\in \C_+[[q]]$, identity \eqref{WI} reads 
\begin{eqnarray}\label{compo1}
(\pi_Y^{\mathrm{Umbra}}(S))^*\stuffle (\pi_Y^{\mathrm{Umbra}}(T))^*=(\pi_Y^{\mathrm{Umbra}}((1+S)(1+T)-1))^*.
\end{eqnarray}  
This shows that if one sets, for $z\in\C$ and $T\in\C_+[[x]]$,  
$
G(z)=(\pi_Y^{\mathrm{Umbra}}(e^{zT}-1))^*, 
$
we get a one-parameter stuffle group\footnote{\textit{i.e.} $G(z_1+z_2)=G(z_1)\stuffle G(z_2);G(0)=1_{Y^*}$.} such that every 
coefficient is polynomial in $z$.
Differentiating it we get 
\begin{eqnarray}\label{diffeq3}
\frac{d}{dz}(G(z))=(\pi_Y^{\mathrm{Umbra}}(T))G(z)
\end{eqnarray}
and \eqref{diffeq3} with the initial condition $G(0)=1_{Y^*}$ integrates as 
\begin{eqnarray}\label{soldiffeq3}
G(z)=\exp_{\stuffle}(z\pi_Y^{\mathrm{Umbra}}(T))
\end{eqnarray}
where the exponential map for the stuffle product is defined,
for any $P\in\ncs{\C}{Y}$ such that $\scal{P}{1_{Y^*}}=0$, is defined by 
$
\exp_{\stuffle}(P):=1_{Y^*}+\dfrac{P}{1!}+\dfrac{P\stuffle P}{2!}+\ldots+\dfrac{P^{\stuffle n}}{n!}+\ldots.
$
In particular, from \eqref{soldiffeq3}, one gets, for $k\ge1$,
$
(zy_k)^*=\exp_{\stuffle}\biggl(-\sum\limits_{n\ge1}y_{nk}\dfrac{(-z)^n}{n}\biggr).
$
\section{Conclusion}
Noncommutative symbolic calculus allows to get identities easy to check and to implement. 
With some amount of complex and functional analysis, it is possible to bridge the gap between symbolic, functional and number theoretic worlds. This was the case already for polylogarithms, harmonic sums and polyzetas. This is the project of this paper and will be pursued in forthcoming works. 
%\section*{Acknowledgements}
 %The authors would like to thank anonymous colleagues for detailed reading of the manuscript and many valuable comments by which this paper could be improved.


\begin{thebibliography}{99}
%
\bibitem{SR1} B. C. Berndt, R. A. Rankin, \textit{Ramanujan: Letters and Commentary}, Amer. Math. Soc. 1995.  

%\bibitem{CAP19_MMI} Talk of GHED at CAP19 (reference to be removed afterwards). 
%
\bibitem{berstel} J. Berstel, C. Reutenauer, 
        \textit{Rational series and their languages},
        Springer-Verlag, 1988.

\bibitem{berstel2}J. Berstel, C.  Reutenauer,
        \textit{Noncommutative Rational Series with Applications}, Encyc. of Math. and its App. series, Camb. Uni. Press: 248 pages, 2011.

\bibitem{BCK} F. Beukers, E. Calabi \& J. Kolk - Sums of generalized harmonic series and volumes, Nieuw Arch. v. Wiskunde 11 (1993), 217-224.

\bibitem{Bou} N. Bourbaki, \textit{Th\'eories Spectrales}, Chapitre 1 et 2,  Springer, 1967.

\bibitem{CoSM1} C. Costermans; V. Hoang Ngoc Minh, \textit{Some Results \`a l'Abel Obtained
by Use of Techniques \`a la Hopf}, Global Integrability of Field Theories and
Applications, Daresbury, 2006.

\bibitem{CosM2} C. Costermans; V. Hoang Ngoc Minh, \textit{Noncommutative algebra, multiple
harmonic sums and applications in discrete probability}, Journal of Symbolic
Computation,  801–817, 2009. 

\bibitem{CHM1} Bui Van Chien, V. Hoang Ngoc Minh,  Ngo Quoc Hoan,  \textit{Families of eulerian functions involved in regularization of divergent polyzetas}, in preparation, 2020.

\bibitem{GHM} G.H.E. Duchamp, V. Hoang Ngoc Minh,  Ngo Quoc Hoan,  \textit{Harmonic sums and polylogarithms at negative multi-indices}, J. Symb. Comput., 83, 166-186, 2017.

\bibitem{GHM1} G.H.E. Duchamp, V. Hoang Ngoc Minh,  Ngo Quoc Hoan,  \textit{Kleene stars
of the plane, polylogarithms and symmetries}, Theoretical Computer Science,
(800):52–72, 2019. 

\bibitem{GMV} G.H.E. Duchamp; V. Hoang Ngoc Minh;   Nguyen Dinh Vu,  \textit{Towards a noncommutative Picard-Vessiot theory}, in preparation, 2020. 

\bibitem{Linz} M. Deneufch\^atel, G.H.E. Duchamp, V. Hoang Ngoc Minh and A. I. Solomon, \textit{Independence of Hyperlogarithms over Function Fields via Algebraic Combinatorics}, \textrm{4th Inter. Confer. on Al. Infor.}, Linz; Proceedings, Lec. Notes in Comp. Sci., 6742, Springer,  2011.

\bibitem{Drin} B. V. Drinfrl’d,  \textit{On quasitriangular quasi-Hopf algebra and a group closely
connected with $Gal(\mathbb{\overline{Q}} /\mathbb{Q})$}, Leningrad Math. J. , (4):829–860, 1991. 


\bibitem{euler1}{Euler L.}, \textit{Vari\ae\,observationes circa series infinitas}, Commentarii Academi\ae Scientiarum Imperialis Petropolitan\ae, vol. 9 (1744) p. 187-188.

\bibitem{euler2}{Euler L.}, \textit{Meditationes circa singulare serierum genus}, Novi. Comm. Acad. Sci. Petropolitanae, 20 (1775), 140-186.

\bibitem{FKMT}{Furusho H., Komori Y., Matsumoto K., Tsumura H.}, \textit{Desingularization of multiple zeta-functions of generalized Hurwitz-Lerch type}, 2014.

\bibitem{Goncharov}{Goncharov A. B.}, \textit{Multiple polylogarithms and mixed Tate motives}. ArXiv:math. AG/0103059 v4, pp 497–516, 2001.

\bibitem{Had} J. Hadamard,  \textit{Th\'eor\`eme sur les s\'eries enti\`eres},  Acta Mathematica, 22:55–
63, 1899. 

\bibitem{Hoffman2} M. E. Hoffman, \textit{Multiple harmonic series}, Pacific J. Math. 152(2): 275-290 (1992).  
 
\bibitem{Minh1} V. Hoang Ngoc Minh,  \textit{Summations of polylogarithms via evaluation transform},  Math. \& Comput. Simul., 1336:707–728, 1996.  

\bibitem{Minh2} V. Hoang Ngoc Minh,  \textit{Differential Galois groups and noncommutative generating series of polylogarithms},  in Automata, Combinatorics and Geometry,
7th World Multi-conference on Systemics, Cybernetics and Informatics,
Florida, 2003. 

\bibitem{Minh4} V. Hoang Ngoc Minh,  \textit{Finite polyz\^etas, Poly-Bernoulli numbers, identities of polyz\^etas and noncommutative rational power series},  Proc. of 4th
International Conference on Words, pages 232–250, 2003.  

\bibitem{Minh5} V. Hoang Ngoc Minh,  \textit{On the solutions of universal differential equation
with three singularities},  Confluentes Mathematic, 11, no. 2:25–64, 2019. 


\bibitem{MJOP} V. Hoang Ngoc Minh,  G. Jacob,  N.E. Oussous,  M. Petitot,  \textit{Aspects
combinatoires des polylogarithmes et des sommes d’Euler-Zagier},  Jour.
\'electro. du S\'emi. Lot. de Com., B43e, 2000. 

\bibitem{Minh6} V. Hoang Ngoc Minh,  G. Jacob,  N.E. Oussous,  M. Petitot, \textit{De l\'alg\`ebre des $\zeta$ de Riemann multivari\'ees \`a l'alg\`ebre des $\zeta$ de Hurwitz multivari\'ees}, Journal \'electronique du S\'eminaire Lotharingien de Combinatoire B44e, 2001.

\bibitem{Matthes} N Matthes, \textit{On the algebraic structure of iterated integrals of quasimodular forms},  Algebra \& Number Theory, 2017 
 
\bibitem{Montel} P. Montel,  \textit{Le\c cons sur les familles normales de fonctions analytiques et leurs
applications}, Gauthier-Villars, 2010. 

\bibitem{RPS} Stanley R. P., \textit{Enumerative Combinatorics},  Cambridge University
Press, Vol. I, 1997. 

\bibitem{Radford} Radford D. E., \textit{A natural ring basis for shuffle algebra and an application
to group schemes}, Journal of Algebra \textbf{58}, (1979), 432-454.

\bibitem{riemann}{Riemann B.}, \textit{ Ueber die Anzahl der Primzahlen unter einer gegebenen Grösse}, Monatsberichte der Berliner Akademie, November 1859. 

\bibitem{Sch} H. H. Schaefer, M. P. Wolff,  \textit{Topological Vector Spaces},  Springer-Verlag
New York, 1999. 

\bibitem{wechsung}{G.~Wechsung}.--
Functional Equations of Hyperlogarithms, in \cite{lewin2}

\bibitem{lewin2}{L.~Lewin}.--
	\textit{Structural properties of polylogarithms}, Math. survey and monographs,
 Amer. Math. Soc., vol 37, 1992.
%

\bibitem{Zhao}{Zhao J.}, \textit{Analytic continuation of multiple zeta functions},
Proc. of the Amer. Math. Society 128 (5): 1275–1283.
\end{thebibliography}
\end{document}